\newcommand{\essinf}{\ensuremath{ \mathrm{ess}\inf }}
\newcommand{\esssup}{\ensuremath{ {\mathrm{ess}\sup }}}
\newtheorem{lemma}{Lemma}[section]
\newtheorem{proposition}[lemma]{Proposition}
\newtheorem{corollary}[lemma]{Corollary}
\newtheorem{definition}[lemma]{Definition}
\theoremstyle{remark}
\newtheorem{remark}[lemma]{Remark}
\newtheorem{example}[lemma]{Example}
\newcommand{\Ft}{\ensuremath{\mathcal{F}_t}}
\newcommand{\Fs}{\ensuremath{\mathcal{F}_s}}
\newcommand{\FT}{\ensuremath{\mathcal{F}_T}}
\newcommand{\Lbbt}{\ensuremath{L^{bb}_t}}
\newcommand{\LbbT}{\ensuremath{L^{bb}_T}}
\newcommand{\St}{\ensuremath{\mathcal{S}_t}}
\newcommand{\T}{\ensuremath{\mathcal{T}}}
\newcommand{\bt}{\ensuremath{\beta_t}}
\newcommand{\bs}{\ensuremath{\beta_s}}
\newcommand{\Lbat}{\ensuremath{L^{ba}_t}}
\newcommand{\rtz}{\ensuremath{\rho^z_t}}
\newcommand{\rsz}{\ensuremath{\rho^z_s}}
\newcommand{\rtzi}{\ensuremath{\rho^{z_i}_t}}
\newcommand{\stz}{\ensuremath{{\sigma}^z_t}}
\newcommand{\stzi}{\ensuremath{{\sigma}^{z_i}_t}}
\begin{document}
\title {Dynamic quasi concave performance measures}
\author {Sara Biagini \thanks{University of Pisa, Italy}  \and  Jocelyne Bion-Nadal \thanks{CNRS, CMAP - Ecole Polytechnique, France}}\maketitle
\begin{abstract}
We define   Conditional  quasi concave Performance Measures (CPMs), on
  random variables bounded from below, to accommodate for additional information.
Our notion  encompasses a wide variety of cases, from conditional expected utility and certainty equivalent to conditional acceptability indexes.  We provide the  characterization of  a  CPM   in terms of an induced  family of conditional convex risk measures.  In the case of  indexes these  risk measures are coherent.  Then,  Dynamic Performance Measures  (DPMs) are introduced and the problem of  time consistency  is addressed. The definition of time consistency chosen here ensures that the positions which are considered good tomorrow are already considered good today.  We prove the equivalence between time consistency for a DPM and  \emph{weak acceptance consistency} for the induced  families of  risk measures. Finally, we extend CPMs and DPMs to dividend processes.

\noindent {\bf Key words:} Conditional performance measure,
{
conditional acceptability index},
induced family of risk measures, dynamic performance measure, time consistency,  risk to reward ratio\\
\noindent{\bf JEL Classification:} primary D81, secondary G11 \\
\noindent  {\bf MSC 2010 Classification:} primary 91B16 and 91B30, secondary 91B70,  46B42\\

\end{abstract}

\section{Introduction}

Portfolio selection and the companion asset allocation  are undoubtedly  one of the most important problems in Finance and Insurance.  In order to make an optimal choice, some performance criterion must be selected.  Since Markowitz's seminal work   many  criteria have  been proposed,  from expected utility   to ratios which seek a balance between reward and risk, such as the  Sharpe Ratio index.   The Sharpe Ratio of a position consists of the ratio of expected value over standard deviation.  It has many pleasant features: clear meaning, easiness of computation,   scale invariance and has consequently had a great  success in the industry. From a normative point of view, and outside a Gaussian context, it is well-known that this  index  has many drawbacks:  it is not monotone and   it may seriously underestimate arbitrage (see e.g. Bernardo and Ledoit \cite[Section 2.2.1]{bl}). Thus, a  whole class of indexes have been developed  to improve the Sharpe Ratio in investment evaluation, while preserving its good features.   Some examples are the Sortino Ratio,  the Gain-Loss Ratio by Bernardo and Ledoit, or, more generally, the ratio of  reward to lower  $p$-th partial moments in Leitner \cite{lei-lpm}, and  generalized Sharpe Ratios by \v{C}ern\'{y} \cite{c}. \\
\indent More recently, Cherny and Madan \cite{cm} have built a theoretical framework for these  \emph{acceptability indexes}. Their axiomatic  definition  stems from an analysis of the theoretically and practically desirable properties   a performance criterion should possess.  For  bounded  positions,  they   show a complete characterization of an acceptability index in terms of a parametric family of coherent risk measures   and propose some new law invariant indexes.  The setup in \cite{cm} is static, in the sense that there are only two dates of interest: today and the horizon $T$, and consequently the index is real valued - possibly infinite.  \\
\indent Dynamic   acceptability indexes and their dual representations have been thoroughly  analyzed in Bielecki at alii \cite{bcz}, in a finite $\Omega$, finite set of dates $\T$ framework.     Rosazza Gianin and Sgarra \cite{rs} remove  the requirement of scale invariance  and provide a (static) characterization of these generalized acceptability indexes in terms of a family of \emph{quasi convex} risk measures on  $L^{\infty}$. Then,  they work in   a Brownian context and focus on dynamic acceptability indexes generated by $g$-expectations and give applications to liquidity risk quantification.\\
\indent Our purpose  is to deepen the theoretical analysis of  dynamic performance criteria.  In Section 2,  over a general probability space and stochastic basis, we   define a Conditional Performance Measure (CPM) for positions bounded from below.

The fact that the domain of the CPM includes not only bounded positions but also those  bounded from below leads to reconsider the continuity axioms.
While the current literature  requires an index to be  continuous from above, we ask for continuity from below.

Thus,  we come out with a performance criterion which   encompasses conditional expected utility and  conditional  acceptability indexes as a particular cases, as shown in the Examples in Section 2.1. Then, in Section 3 we provide the representation of a CPM in terms of an induced family of risk measures.  Such family is in fact  \emph{convex}, and not only quasi convex, so we  complete and extend to the conditional case the intuition for the static case in \cite[Proposition 3]{rs}. The results in Section 3  are, in order:
\begin{itemize}
  \item in Section 3.1, we prove a CPM $\bt$ induces a  parametric family of convex risk measures $(\rtz)_z$,  with $z\in (z_d, z_u)$ where $z_d<z_u$  are  the (non-random) essential infimum and supremum of the CPM.  The properties of $(\rtz)_z$ are listed in Proposition \ref{induced-risks};
  \item  in Section 3.2 we define standard families of risk measures, and we show that any such  family generates a CPM;
  \item   $(\rtz)_z$   is the unique standard family which generates $\bt$. This is shown in Proposition \ref{sumup}.
\end{itemize}

In Section 4, the   dynamic case is considered. A  Dynamic Performance Measure (DPM)  is naturally defined as a collection of CPMs and  the time consistency problem is addressed. The notion we choose is very close to the one in \cite[Proposition 6]{rs}, and basically ensures that positions which are considered good tomorrow are already considered good today.   In virtue of the results in Section 3, a time consistency requirement on the DPM must have an equivalent in terms of the time consistency of the induced families of convex risk measures. In fact, Proposition \ref{consistency} shows our notion of time consistency for a DPM is equivalent to  the weak acceptance consistency for the induced risks.  We conclude with some examples and counterexamples on DPMs (Subsection 4.2), including DPM generated by $g$-expectations,   and with a note on the extension of  CPMs and DPMs to dividend  processes (Subsection 4.3).

 \section{ Conditional performance measures and acceptability indexes}
A fixed time  horizon $T>0$ is given  and  the trading dates $ \T$ are a  subset of $[0,T]$, typically either a  finite subset  $\{t_1, t_2,\ldots t_n \},t_n=T,$ or the entire interval. The stochastic basis $(\Omega, (\Ft)_{t\in \T}, P) $ models the possible outcomes, the evolution of the information with time and the probability of the various events.    In the following, $t$ is always assumed to be in $\T$.
For a shorthand, let us denote by $L^{\infty}_t$ the space $L^{\infty}(\Omega,\Ft, P )$  of essentially bounded variables and  let  $\Lbbt$ denote  the lattice of essentially  bounded from below ones, possibly $+\infty$-valued.  Equalities and inequalities are intended to hold  $P$-almost surely. For example, $X>0$ means $P(X>0)=1$, and  we refer to elements of $\Lbbt$ simply as 'bounded from below' variables and to elements in $\L^{\infty}_t$ as 'bounded'.  The space of  simple variables in $\Lbbt $, i.e. those $\mathbb{R}$-valued  and assuming only a finite set of real values, is denoted by $\St $.   The conditional expectations $E[ \cdot \mid \Ft], E^Q[\cdot \mid \Ft] $ where $Q\ll P$  are denoted by $E_t[\cdot], E_t^Q[\cdot]$.     \\
\indent The convention  adopted hereafter  is  that notation is  self-explaining.  (Equivalence classes of) $\FT$-measurable random variables are denoted  by capital Latin letters  $X,Y, \ldots$,  $\Ft$-measurable random variables by the  Greek letters  $ \xi, \eta $, while  the Greeks $\varphi,\psi$ are reserved for elements of $\St$.   Real numbers will be indicated by lower Latin letters $a,b, c,d\ldots$. However, for clarity's sake  there will be  complete  statements like e.g. $c \in \mathbb{R}$ in the definitions and where any misunderstanding is possible.

\begin{definition}\label{perf}
A performance measure, conditional to $\Ft$, is a map $\beta_t: \LbbT \rightarrow \Lbbt $  with the  following six properties:
\begin{enumerate}
\item  quasi-concavity: for any real  $c \in[0,1] $  and for any $X,Y \in \LbbT$
     $$ \bt( c X + (1-c)Y) \geq \min( \bt(X), \bt(Y)) $$

\item if $z_d,z_u$ denote the infimum and supremum of $\bt$,
 $$z_d = \essinf_{X} \bt(X), z_u =   \esssup_{X} \bt (X) $$
  then, $z_d ,  z_u  $ are non-random and  $z_d<z_u$ (possibly infinite);
\item    monotonicity: $\bt$ is  \begin{enumerate}
                        \item  non decreasing, $\bt(X)\geq\bt(Y)$ whenever $X\geq Y$;
                        \item  increasing  over  constant positive shifts:
                        $$  c>0  \Rightarrow  \bt( X+c)>\bt(X )  \text{ on } \{ \bt(X)<z_u \} \cap \{ \bt( X+c)>z_d\} $$
                      \end{enumerate}
\item continuity from below: if $X_n \uparrow X\in \LbbT$, then $\bt(X) =\lim_n \bt(X_n) $;
\item locality:
\begin{equation}
I_B\bt(X)= I_B\bt(X I_B) \  \forall B\in \Ft
\label{loc}
\end{equation}

\item for each $z\in \mathbb{R}$, let $A_{t}^z: =\{ X \in \LbbT \mid \bt(X)\geq z\}$  {denote the} $z$-upper level set of $\bt$.
    If   $z_d < z< z_u$, then   $A^z_t \cap \Lbbt$ is  uniformly bounded from below, i.e. $\essinf A^z_t \cap \Lbbt   \geq x_z$ for some $x_z\in \mathbb{R}$.
\end{enumerate}
For a shorthand, conditional performance measures will be indicated by the acronym CPM.   If the CPM   further satisfies:
\begin{itemize}
  \item[7.]  scale invariance: the map is positively homogeneous of degree $0$,
    $$\bt(c X) = \bt(X), \text{ for all } X    \text{ and reals } c>0$$
  \item[8.] nonnegativity: $\bt\geq 0$,
\end{itemize}
    it is called  Conditional Acceptability Index (CAI).
\end{definition}

Some of the above properties are intuitive: quasi-concavity means diversification is encouraged by the CPM, monotonicity that more is preferred to less.   The requirement of non-randomness on  $z_d, z_u $  and  that of  strict monotonicity on constant positive shifts are technical,  but   quite reasonable and verified in all the examples we will see.

 Continuity from below, Property 4, in turn ensures that profiles which are  'pointwise-close' to a given profile $X$ in the limit cannot underperform $X$.  \\
  \indent To give a  precise mathematical statement for the last assertion,  recall first  the notion of  order convergence in  a vector lattice $L$. A sequence  $(X_n)_n$ is order convergent to $X$, notation $X_n \stackrel{o}{\rightarrow} X$  if there exists a nonnegative sequence $(Y_n)_n$ such that
    \begin{equation} \label{orderconv}
    |X_n - X| \leq Y_n  \text{ and } Y_n \downarrow 0
    \end{equation}
 We refer to Aliprantis and Border \cite[Chapters 8-9]{ab}, for more details on lattices and order convergence, to  Foellmer and Schied
  \cite[Section 4.2]{fs} for applications of these notions to convex risk measures on spaces of bounded variables, and to Biagini and Frittelli \cite{bf-namioka}  when the convex risk measures are defined over general spaces. \\
   \indent The following Lemma is the analog of the equivalence between the Monotone Convergence Theorem  and  Fatou Lemma in Calculus. An immediate corollary is  that for  a CPM $\bt: \LbbT \rightarrow \Lbbt$  continuity from below is equivalent to   order lower semicontinuity
 with respect to  sequences uniformly bounded from below.

\begin{lemma}\label{c.above-lsc}
With the convention $\infty -\infty =0$ and $c+\infty=\infty$,  for a monotone non decreasing map $\pi: \LbbT \rightarrow \Lbbt$  continuity from below is equivalent to   order lower semicontinuity with respect to sequences uniformly bounded from below, i.e. is equivalent to:
$$  X_n\stackrel{o}{\rightarrow}  X  \text{ and } X_n \geq c \text{ for all  } n \Rightarrow \pi(X) \leq \liminf_n \pi(X_n) $$
\end{lemma}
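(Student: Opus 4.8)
My plan is to prove the two implications separately, using the auxiliary sequence $Z_n := \inf_{k\geq n} X_k$ to pass between monotone and order convergence, in exact parallel to how Fatou's Lemma is deduced from the Monotone Convergence Theorem.

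For the direction \emph{order lower semicontinuity $\Rightarrow$ continuity from below}, I would start from a sequence $X_n \uparrow X$ in $\LbbT$. Since $X_1 \in \LbbT$ is bounded from below, say $X_1 \geq c$, the whole increasing sequence satisfies $X_n \geq c$, so it is uniformly bounded from below. Moreover $X_n \uparrow X$ gives order convergence: taking $Y_n := X - X_n \geq 0$ one has $Y_n \downarrow 0$ and $|X_n - X| = Y_n$, so $X_n \stackrel{o}{\rightarrow} X$. Order lower semicontinuity then yields $\pi(X) \leq \liminf_n \pi(X_n)$. On the other hand monotonicity of $\pi$ gives $\pi(X_n) \leq \pi(X)$ for every $n$, so the non-decreasing sequence $(\pi(X_n))_n$ has a limit that is $\leq \pi(X)$. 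Combining the two bounds forces $\pi(X) = \lim_n \pi(X_n)$, which is continuity from below.

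For the converse, \emph{continuity from below $\Rightarrow$ order lower semicontinuity}, I would take $X_n \stackrel{o}{\rightarrow} X$ with $X_n \geq c$ for all $n$, and let $(Y_n)_n$ be the dominating sequence from (\ref{orderconv}), so $|X_n - X| \leq Y_n$ and $Y_n \downarrow 0$. The key device is the sandwiching sequence $Z_n := \inf_{k\geq n} X_k$. As a countable essential infimum of elements of $\LbbT$ bounded below by $c$, each $Z_n$ lies in $\LbbT$, and the sequence is non-decreasing because the infimum is taken over shrinking index sets. I would then show $Z_n \uparrow X$: for $k \geq n$ one has $Y_k \leq Y_n$, hence $X_k \geq X - Y_k \geq X - Y_n$, and taking the infimum gives $Z_n \geq X - Y_n$; on the other hand $Z_n \leq X_n \leq X + Y_n$. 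Thus $X - Y_n \leq Z_n \leq X + Y_n$, i.e. $|Z_n - X| \leq Y_n \downarrow 0$, and since $(Z_n)_n$ is non-decreasing this means $Z_n \uparrow X$. Now continuity from below gives $\pi(X) = \lim_n \pi(Z_n)$, while $Z_n \leq X_k$ for every $k \geq n$ together with monotonicity yields $\pi(Z_n) \leq \inf_{k\geq n}\pi(X_k)$; letting $n \to \infty$ gives $\pi(X) = \lim_n \pi(Z_n) \leq \liminf_k \pi(X_k)$, exactly the desired lower semicontinuity.

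The main point requiring care -- and the reason the conventions $\infty - \infty = 0$ and $c + \infty = \infty$ are stated -- is the behaviour on the set $\{X = +\infty\}$ inside the sandwich $X - Y_n \leq Z_n \leq X + Y_n$. There order convergence forces $X_n = +\infty$ for all sufficiently large $n$ (otherwise $|X_n - X|$ could not stay below the vanishing $Y_n$), so $Z_n = +\infty = X$ eventually and $Z_n \uparrow X$ holds on that set as well; the conventions are precisely what keep the displayed inequalities meaningful where infinite values occur. Apart from this degenerate set the argument is routine order-lattice bookkeeping, so the $+\infty$-valued case is the only genuine obstacle.
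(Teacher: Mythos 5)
Your proof of the substantive implication --- continuity from below implies order lower semicontinuity along sequences bounded below --- is correct, and it follows the same strategy as the paper's: produce a nondecreasing minorant that increases to $X$, apply continuity from below, and finish with monotonicity. The paper's minorant is $(X-Y_n)\vee c$, for which domination by $X_n$ and convergence to $X$ are claimed in one line; yours is the classical Fatou device $Z_n=\inf_{k\geq n}X_k$, for which the domination $Z_n\leq X_k$, $k\geq n$, is automatic, while the convergence $Z_n\uparrow X$ must be checked through the sandwich $X-Y_n\leq Z_n\leq X+Y_n$ --- the same estimate the paper uses directly. The two arguments are essentially equivalent in content; your explicit treatment of $\{X=+\infty\}$ (where order convergence forces $X_k=+\infty$ for all large $k$, pointwise) is if anything more careful than the paper's, since under the convention $\infty-\infty=0$ the inequality $(X-Y_n)\vee c\leq X_n$ is itself delicate precisely on that set.

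There is, however, a genuine gap in the converse implication, which the paper dismisses as trivial and which you chose to write out. You assert that $X_n\uparrow X$ is order convergent to $X$ because $Y_n:=X-X_n$ satisfies $Y_n\downarrow 0$. This fails on $\{X=+\infty\}$ whenever the $X_n$ remain finite there: at such points $Y_n=+\infty$ for every $n$ (the convention $\infty-\infty=0$ does not apply, since $X_n<\infty$), so $Y_n$ does not decrease to $0$; in fact no dominating sequence as in \eqref{orderconv} exists at all --- for example $X_n=n$ is not order convergent to $+\infty$ in the sense of \eqref{orderconv}. Hence your argument establishes the converse only for increasing sequences whose limit is a.s.\ finite, or whose terms are eventually $+\infty$ where $X$ is. This is not a removable blemish under the literal definition \eqref{orderconv}: on a trivial $\sigma$-algebra the map defined by $\pi(x)=0$ for $x\in\mathbb{R}$ and $\pi(+\infty)=1$ is monotone nondecreasing and order lower semicontinuous in the stated sense, yet not continuous from below. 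Closing the gap requires reading order convergence in the lattice sense (existence of monotone sandwiching sequences $a_n\uparrow X$, $b_n\downarrow X$ with $a_n\leq X_n\leq b_n$, under which $X_n\uparrow X$ is trivially order convergent and the converse implication is immediate). The paper glosses over exactly the same point by declaring this direction to be the easy one, but since you spelled it out, the claim ``$Y_n\downarrow 0$'' needs this qualification.
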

\begin{proof}
 $\LbbT, \Lbbt $ are only lattices and not vector spaces, but any absolute difference $|Y-X|$ is a well defined element of $\LbbT$ under the stated convention. The  only  thing to prove  is  that  continuity from below implies order lsc with respect to sequences bounded from below.
 Assume then $\pi $ is continuous from below and  pick a   sequence $(X_n)_n$, $X_n \geq  c$,  which is
  order  convergent to $X$; then, $X_n \rightarrow X$ a.s.,  so $X\geq c$,  and  there exists a sequence  $Y_n\downarrow 0$  such that $ |X_n-X| \leq Y_n$. So,
  $$ (X-Y_n)\vee c \leq X_n  \text{ and } (X-Y_n)\vee c \uparrow X$$
  whence, passing to the liminf, by monotonicity and continuity from below:
 $$ \pi(X) = \lim_n \pi((X-Y_n)\vee c)  \leq \liminf_n \pi(X_n) $$
\end{proof}

Note that  the monotonicity of $\bt$, $+\infty \in \LbbT$ and continuity from below of $\bt$ imply
 \begin{equation}\label{zeta-u}
   z_u = \bt(+\infty)= \lim_{ x\uparrow +\infty} \bt(x)
 \end{equation}
 Since  $\bt$ is defined on $\LbbT$,  monotonicity also implies
 $$z_d= \lim_{x\downarrow -\infty} \bt(x)$$
Therefore, for  any $z_d \leq z <z_u$, the upper level set $A^z_t $ contains a constant.  This  will be often used in the following, in the form

\begin{equation}\label{six-a}
z_d \leq z <z_u \Rightarrow A^z_t\cap L^{\infty}_t \neq \emptyset
\end{equation}

Property 5, locality, is rather natural.  On a set of states $B$ revealed  at $t$,  the  performance of the future profile  $X$ is uniquely determined by the restriction of $X$ to $B$, all the other states being irrelevant. \\
  \indent  The last property of CPM, property 6,   is  a mild technical assumption,   as it is  verified in a number of interesting cases as shown in the Examples in Section 2.1.
Finally, scale invariance and nonnegativity are  requirements for (static) acceptability indexes as introduced by Cherny and Madan \cite{cm}.  Scale invariance  for a performance means it evaluates the goodness of the whole direction of trade   generated by a profile $X$, namely its ray, rather than $X$ only.  This   may be appropriate  only for a large investor,  for whom quantity does not matter, and when the liquidity risk is negligible.

\begin{remark}\label{contbelow} The motivations supporting  the choice of  continuity from below  for $\bt$, instead of continuity from above  as assumed in \cite{cm,rs,bcz}, are manifold.
\begin{enumerate}
  \item Some   natural performance measures  as conditional expectation/expected utility/certainty equivalent on $\LbbT$ are \emph{not} continuous from above, but are continuous from below.
  \item \emph{Continuity from below solves the value-at-$0$ puzzle for indexes.} In the set up of Cherny and Madan, continuity from above implies that an index unbounded above must be necessarily $+\infty$ in $0$. This  is rather awkward for any index, but especially for those  null on the negative constants or on the whole  negative orthant.     In our context  an index   null on negative constants   in $0$ is naturally $0$ valued. See Lemma  \ref{scaleinvariance} and Remark \ref{expconsist} for a proof of this simple fact, and  Section 2.1  for the  concrete example  of  the  Bernardo and Ledoit's Gain-Loss Ratio   index  and the Best Gain-Loss Ratio index introduced by Biagini and Pinar \cite{bp}.
  \item As shown in Section 3,  continuity from below of $\bt$   implies   the associated risk measures $\rtz$ are also continuous from below. This in turn implies  that the risks, when restricted to   $L^{\infty}$,  have a  special dual representation which \emph{implies} continuity from above  on $L^{\infty}$. This special dual representation, with a maximum over the dual variables,  is essential in any practical application.  See e.g. the examples given by \cite{cm}.
  \item Last, but not least, in  many specific cases there is a natural vector  space (a Banach lattice in fact) over which the map can be defined. This space is $L^1$ for the case of Gain-Loss Ratio, $L^p, p<+\infty$ for Lower Partial Moments analysis,  or, more generally, an Orlicz  heart as in  the Examples \ref{CCE}, \ref{glr} and \ref{rrr}.  If the  map  is e.g. a (quasi concave) ratio of concave monotone increasing  function over a convex monotone  decreasing function,  it will  be norm continuous in the interior of its proper domain by the extended Namioka-Klee Theorem. Since these norm topologies are order continuous, the map will be  globally order continuous, from above \emph{and } below,  on the interior of the proper domain.  See \cite{bf-namioka} for more details on this important point.

\end{enumerate}

\end{remark}

\begin{lemma}\label{scaleinvariance}
Suppose a CPM $\bt$ is scale invariant. Then, for all  $\xi \in \Lbbt$

 \begin{equation}\label{scale}
 \bt(\xi) =  \bt(1) I_{\{ \xi>0\}}  + \bt(0) I_{\{ \xi \leq 0\}}
 \end{equation}
 and consequently $\bt(X)= \bt(0)=z_d$ for all $X\in \LbbT, X\leq 0$ and $ \bt(X) = \bt(1)=z_u$ for all $X$ such that $X \geq \xi >0$ for some $\xi \in \Lbbt$.\\
  \indent In particular, when $\bt$ is unbounded above ($z_u=+\infty$), then  $\bt(X) = +\infty$ on all $X\in \LbbT$ such that $X\geq \xi >0$ for some $\xi \in \Lbbt$.
\end{lemma}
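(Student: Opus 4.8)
The plan is to upgrade the constant scale invariance (Property 7) to a \emph{random} scaling property by means of locality (Property 5), and then to read off the value of $\bt$ from the sign of $\xi$. As a preliminary I would record the identities $\bt(0)=z_d$ and $\bt(1)=z_u$: by scale invariance $\bt(x)=\bt(-1)$ for every real $x<0$ and $\bt(x)=\bt(1)$ for every real $x>0$, so letting $x\downarrow-\infty$ and $x\uparrow+\infty$ and using the displays $z_d=\lim_{x\downarrow-\infty}\bt(x)$ and $z_u=\bt(+\infty)=\lim_{x\uparrow+\infty}\bt(x)$ gives $z_d=\bt(-1)$ and $z_u=\bt(1)$; continuity from below along $-1/n\uparrow0$ then yields $\bt(0)=\bt(-1)=z_d$.

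The key technical step is: for a simple $\eta\in\St$ with $\eta>0$ and any $X\in\LbbT$ one has $\bt(\eta X)=\bt(X)$. Writing $\eta=\sum_i c_i I_{B_i}$ with $c_i>0$ and $\{B_i\}\subset\Ft$ a partition, I would apply locality to pass from $\bt(\eta X)$ to $\sum_i I_{B_i}\bt(c_i X I_{B_i})$, remove the constants $c_i$ by scale invariance, and fold the indicators back in by locality again. This simple version extends to any $\eta\in\Lt$ with $\epsilon\le\eta\le M$ (for some $0<\epsilon\le M$) and any $X\ge0$: approximate $\eta$ from below by simple $\eta_n\uparrow\eta$, $\eta_n\ge\epsilon$, so that $\eta_n X\uparrow\eta X$, and invoke continuity from below together with the simple case.

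With these tools I would establish the formula \eqref{scale}. Using locality I split $\Omega$ into $\{\xi>0\}$ and $\{\xi\le0\}$. On $\{\xi\le0\}$ the truncation $\xi I_{\{\xi\le0\}}\le0$, so monotonicity gives $\bt(\xi I_{\{\xi\le0\}})\le\bt(0)=z_d$, while $z_d$ is the essential infimum, forcing equality; hence $I_{\{\xi\le0\}}\bt(\xi)=I_{\{\xi\le0\}}\bt(0)$. On the finite positive part I exhaust $\{0<\xi<+\infty\}$ by the sets $B_n:=\{1/n\le\xi\le n\}$; on each $B_n$ the multiplier $\xi$ lies in $[1/n,n]$, so the bounded random-scaling property applied with $X=I_{B_n}$, together with locality, gives $I_{B_n}\bt(\xi)=I_{B_n}\bt(I_{B_n})=I_{B_n}\bt(1)$, and letting $B_n\uparrow\{0<\xi<+\infty\}$ covers this set. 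On $\{\xi=+\infty\}$ I write $\xi I_{\{\xi=+\infty\}}=\lim_n n I_{\{\xi=+\infty\}}$ and use continuity from below and constant scale invariance to obtain $I_{\{\xi=+\infty\}}\bt(\xi)=I_{\{\xi=+\infty\}}\bt(1)$. Combining the three pieces yields \eqref{scale}.

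The stated consequences follow quickly. If $X\le0$ then monotonicity gives $\bt(X)\le\bt(0)=z_d$, and the essential infimum bound gives the reverse, so $\bt(X)=z_d=\bt(0)$. If $X\ge\xi>0$ with $\xi\in\Lbbt$, then by \eqref{scale} (with $\{\xi>0\}=\Omega$) $\bt(\xi)=\bt(1)=z_u$, so monotonicity gives $\bt(X)\ge z_u$ while the essential supremum gives $\bt(X)\le z_u$; hence $\bt(X)=z_u=\bt(1)$, which specializes to $+\infty$ when $z_u=+\infty$. The main obstacle I anticipate is the random-scaling step and, above all, controlling the full range of $\xi$: the argument must treat separately the values near $0$, the large finite values, and the value $+\infty$, which is exactly why the truncation sets $B_n$ and continuity from below are needed rather than a single global rescaling.
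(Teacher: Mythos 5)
Your proof is correct, and it runs on the same two engines as the paper's---locality upgrading scale invariance from constants to simple $\Ft$-measurable multipliers, and continuity from below to pass to general $\xi$---but the architecture is genuinely different. The paper handles a strictly positive $\xi$ with a single direct approximation: it picks simple $\varphi_n \uparrow \xi$, notes that the $\varphi_n$ cannot in general be chosen strictly positive, and repairs this by working with the identity $\bt(\varphi_n)I_{\{\varphi_n>0\}}=\bt(I_{\{\varphi_n>0\}})I_{\{\varphi_n>0\}}$; since $I_{\{\varphi_n>0\}}\uparrow 1$, one limit gives $\bt(\xi)=\bt(1)$ on the whole strictly positive part, $+\infty$ values included, and the general case then follows by localization. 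You instead isolate a standalone random-scaling lemma, $\bt(\eta X)=\bt(X)$ for $\Ft$-measurable $\epsilon\le\eta\le M$ and $X\ge 0$, and decompose $\Omega$ into $\{\xi\le 0\}$, the truncations $B_n=\{1/n\le\xi\le n\}$, and $\{\xi=+\infty\}$. The truncation is exactly what sidesteps the vanishing-approximant issue the paper flags (your multipliers are bounded away from zero by construction), at the cost of a separate, easy argument on $\{\xi=+\infty\}$; in exchange, your lemma makes explicit the computation the paper compresses into ``locality and scale invariance imply'', and it is a reusable, slightly more general statement. The remaining differences are cosmetic: on $\{\xi\le 0\}$ you sandwich using $\bt(\cdot)\ge z_d=\bt(0)$, which your preliminary step $\bt(0)=\bt(-1)=z_d$, $\bt(1)=z_u$ justifies, whereas the paper sandwiches between $\bt(c)$, with $c$ a negative lower bound of $\xi$, and $\bt(0)$; both are valid.
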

\begin{proof} Suppose first $\xi>0$ and pick $\varphi_n \in \St$,  $\varphi_n \uparrow \xi$. Note $\varphi_n $ cannot be chosen strictly positive in general.  Since $\varphi_n$ is simple, the locality property and scale invariance  imply   $ \bt( \varphi_n ) I_{\{\varphi_n >0 \}} =  \bt(\varphi_n I_{\{\varphi_n >0 \}}) I_{\{\varphi_n >0 \}}= \bt(I_{\{\varphi_n >0 \}}) I_{\{\varphi_n >0 \}}$. By continuity from below  then
  $$\bt(\xi) = \lim _n \bt(\varphi_n ) = \lim_n \bt( \varphi_n ) I_{\{\varphi_n >0 \}}  = \lim_n  \bt(I_{\{\varphi_n >0 \}}) I_{\{\varphi_n >0 \}} = \bt(1)$$
On the other side, locality,  scale invariance and  continuity from below give  $\beta_t(\varphi)=\beta_t(0)$ for all $\varphi \leq 0$ in $S_t$. Let  $\xi \in \Lbbt$, $\xi\leq 0$. Then, $\xi$ is bounded and if $c$ indicates a negative lower bound, by monotonicity $ \bt(c) \leq \bt(\xi) \leq \bt(0)$, whence $\bt(\xi)=0$ and   \eqref{scale} follows by localization.  The conclusions for $X\in \LbbT$ are easy consequences of \eqref{scale} and of the monotonicity property. \\
\indent  The last assertion  follows from \eqref{zeta-u} and \eqref{scale}.
\end{proof}
\begin{remark}[On CAIs] \label{expconsist} Thanks to the above Lemma, a CAI  which is: 1) unbounded above; and 2)  expectation consistent, i.e. $\bt(X) = 0 $
for all $X$ such that $E_t[X]\leq 0$ and $\bt(X)>0$ if $E_t[X]>0$, verifies

$$ \bt(\xi) = +\infty I_{\{ \xi>0\}} + 0 I_{\{ \xi\leq 0\}} \ \text{ on }  \Lbbt$$
and  $\bt(X)= \bt(0)=0$ for all $X\in \LbbT, X\leq 0$ and $ \bt(X) = +\infty $ for all $X$ such that $X \geq \xi >0$ for some $\xi \in \Lbbt$.
\end{remark}

\bigskip
Locality of the CPM ensures that upper level sets are also local, in the sense explained below.
\begin{lemma} \label{pasting}
Let $\beta_t$ be a CPM and suppose $X_1,X_2$ satisfy:
$$  \bt(X_1) \geq z   \text { on } B\in \Ft,  \bt(X_2) \geq z  \text { on } B^c.$$
Then $Y = X_1 I_B + X_2 I_{B^c} \in A^z_t$.
\end{lemma}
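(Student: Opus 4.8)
The plan is to reduce everything to the locality property (Property 5, equation \eqref{loc}) applied on the two complementary pieces $B$ and $B^c$. First I would record that $Y$ is a legitimate argument of the map: since $X_1,X_2\in\LbbT$ and $B\in\Ft\subseteq\FT$, the variable $Y=X_1 I_B + X_2 I_{B^c}$ is $\FT$-measurable and bounded from below, hence $Y\in\LbbT$ and $\bt(Y)$ is well defined.

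Next, on the set $B$ I would compute $\bt(Y)$ by using locality twice. Because $Y I_B = X_1 I_B$, one has
$$I_B\,\bt(Y) = I_B\,\bt(Y I_B) = I_B\,\bt(X_1 I_B) = I_B\,\bt(X_1),$$
where the first and last equalities are \eqref{loc} applied to $X=Y$ and to $X=X_1$ respectively, and the middle one is the pointwise identity $Y I_B = X_1 I_B$. Since $\bt(X_1)\geq z$ on $B$ by hypothesis, this yields $I_B\,\bt(Y)\geq z\,I_B$, that is $\bt(Y)\geq z$ on $B$.

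The argument on $B^c$ is identical with the roles of $X_1,X_2$ and of $B,B^c$ interchanged: from $Y I_{B^c} = X_2 I_{B^c}$ and two applications of \eqref{loc} one gets $I_{B^c}\,\bt(Y) = I_{B^c}\,\bt(X_2)\geq z\,I_{B^c}$, so $\bt(Y)\geq z$ on $B^c$. Patching the two inequalities over the partition $\{B,B^c\}$ gives $\bt(Y)\geq z$ on all of $\Omega$, i.e. $Y\in A^z_t$.

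There is no genuine obstacle here: the statement is essentially the assertion that locality propagates to upper level sets, and the only point requiring a little care is the bookkeeping with the indicators when the variables are $+\infty$-valued. One must read all the identities above as $P$-a.s. equalities between elements of $\Lbbt$ under the conventions stated for the map; but this is exactly the setting in which \eqref{loc} is postulated, so no extra work is needed.
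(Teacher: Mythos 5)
Your proof is correct and follows essentially the same route as the paper's: the identical chain $I_B\,\bt(Y)=I_B\,\bt(Y I_B)=I_B\,\bt(X_1 I_B)=I_B\,\bt(X_1)\geq z\,I_B$ via two applications of locality, repeated symmetrically on $B^c$, then patching over the partition. The only additions (checking $Y\in\LbbT$ and the remark about a.s.\ identities under the stated conventions) are harmless bookkeeping that the paper leaves implicit.
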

\begin{proof}
From locality of $\bt$, $$  I_B \bt (Y ) = I_B \bt (Y I_B)= I_B \bt (X_1 I_B) = I_B \bt(X_1) \geq z I_B$$
Similarly  on $B^c$ we get $ I_{B^c}  \bt (Y) \geq z I_{B^c}  $ and therefore $Y\in A^z_t$.
\end{proof}

\begin{corollary} \label{simple}
 If $z_d\leq z<z_u$ and   $ \bt(X) \geq z   \text { on } B\in \Ft$, then there exists $\widetilde{X}\in \LbbT$, $ \widetilde{X} I_B = XI_B$ such that $\bt(\widetilde{X})\geq z$. The same result holds when inequalities are strict, i.e. if  $ \bt(X) >z$ on $B$, then there is $\widetilde X$ such that $\bt(\widetilde{X})>z$.
\end{corollary}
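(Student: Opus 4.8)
The plan is to reduce the statement to the pasting Lemma~\ref{pasting} by completing $X$ off $B$ with a suitable constant. First I would use the hypothesis $z_d \leq z < z_u$ together with \eqref{six-a}: the upper level set $A^z_t$ contains a constant, so there is some $c \in \mathbb{R}$ with $\bt(c) \geq z$ $P$-a.s., in particular on $B^c$. The existence of such a \emph{global} constant is the one point I would be careful about, since it is exactly here that Property~2 (non-randomness of $z_d,z_u$) and continuity from below enter, through their consequence \eqref{six-a} that $A^z_t$ meets the constants for every $z_d\leq z<z_u$; without the non-random bounds one could not guarantee a single constant dominating $z$ off $B$.

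Then I would set $\widetilde{X} := X I_B + c I_{B^c}$. Since $X \in \LbbT$ and $c$ is a real constant, $\widetilde{X}$ is again bounded from below and $\FT$-measurable (as $B\in\Ft\subseteq\FT$), hence $\widetilde{X}\in\LbbT$, and by construction $\widetilde{X} I_B = X I_B$. Now I would apply Lemma~\ref{pasting} with $X_1=X$ and $X_2=c$: the hypotheses $\bt(X_1)\geq z$ on $B$ and $\bt(X_2)=\bt(c)\geq z$ on $B^c$ are met, so $\widetilde{X}=X_1 I_B + X_2 I_{B^c}\in A^z_t$, i.e.\ $\bt(\widetilde{X})\geq z$ globally.

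For the strict version I would first observe that Lemma~\ref{pasting} holds verbatim with every $\geq$ replaced by $>$, since its proof is a pure locality computation giving $I_B\bt(Y)=I_B\bt(X_1)>zI_B$ and symmetrically on $B^c$, so strictness is preserved. To supply a strictly dominating constant I would interpolate: using $z<z_u$, pick $z'$ with $z<z'<z_u$, apply \eqref{six-a} at level $z'$ to get a constant $c$ with $\bt(c)\geq z'>z$, and paste as before to conclude $\bt(\widetilde{X})>z$. The whole argument is short; the only genuine content is the passage from the localized inequality on $B$ to a global one, which the constant-completion plus pasting makes routine.
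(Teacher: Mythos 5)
Your proof is correct and follows essentially the same route as the paper's: both invoke \eqref{six-a} to produce an element of $A^z_t\cap L^{\infty}_t$ (you take a constant, the paper a bounded $\Ft$-measurable $\xi$, an immaterial difference) and then paste it with $X$ via Lemma~\ref{pasting}, interpolating a level $z'\in(z,z_u)$ for the strict case. Your explicit remark that the pasting argument preserves strict inequalities is a point the paper leaves implicit, but it is the same argument.
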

\begin{proof}
When $z<z_u$, $A^z_t\cap L^{\infty}_t \neq \emptyset$ by  \eqref{six-a}.   Pick $\xi$ is this set, let  $X_1=X$, $X_2=\xi$ and define $\widetilde{X} =Y$ where $Y$ is as in the previous Lemma. For the case of strict inequalities, pick $\xi $ in $A^{z'}_t\cap L^{\infty}_t  $ for some $z_u > z'>z$ and define $\widetilde{X}$ in the same way.
\end{proof}
\subsection{Examples of CPMs and CAIs}

 \begin{example}[Conditional expected (random) utility]\label{ut}
Consider a  function $U: (-\infty, +\infty] \times \Omega \rightarrow (-\infty, +\infty]$ which is $\mathcal{B}((-\infty, +\infty])\otimes \FT$-measurable. We assume it is a stochastic utility function, in the sense that for every fixed $\omega$, $U(\cdot, \omega)$ is a   finite-valued,  concave,  monotone nondecreasing function. Note $U(\cdot, \omega)$ is a fortiori continuous  on $\mathbb{R}$.\\
\indent   In order to define a map via  conditional expectation   extra assumptions must be imposed. In fact $U(X(\cdot),\cdot)$ is $\FT$ measurable when $X$ is, but   conditional expectation may not be either  well-defined or $\Lbbt$-valued on $\LbbT$.  Also,    Properties 2  and 6  of CPMs do not hold for a generic stochastic utility as it can be seen from variations of  the items in the examples list below (e.g. if in item 3 the endowment $W$ is not bounded).   Suppose then:
 \begin{itemize}
   \item[a) ] for all $x \in \mathbb{R}$,  $ U(x, \omega)\in \LbbT$
   \item[b) ] $U(+\infty,\omega)= \sup_x U(x, \omega)$ and $ E_t[U(+\infty, \omega)] \in (-\infty, +\infty]$
   \item[c) ] for any  fixed level $z<z_u$,  $  \{ \xi \in \Lbbt  \mid     E_t[U(\xi, \omega)] \geq z   \} $  contains a bounded variable  and is  bounded from  below.

 \end{itemize}
Define on $\LbbT$:
$$ \bt(X):= E_t[U(X(\omega), \omega)] $$
 Then $\bt$ is a CPM. In fact, it is easy to verify that item a)   implies $\bt $  is well defined and  $\Lbbt$-valued.  This map  is evidently concave and  monotone.  By the monotone convergence Theorem, assumptions a) and  b),   $\bt$ is continuous from below (Property 4 of CPM).  Assumption b) and  monotonicity give:
$$  E_t[ U(+\infty, \omega)] \geq \esssup_X  E_t[U(X(\omega), \omega)]  \geq \esssup_x  E_t[U(x, \omega)]   =  E_t[U(+\infty, \omega)]$$
 which implies $z_u = E_t[ U(+\infty, \omega)] \in (-\infty, +\infty] $, while assumption c) implies $z_d=-\infty$ so that Property 2 of CPMs holds,    Property 5 is automatically satisfied given the structure, as
$$ U(X,\omega) I_B = U(X I_B,\omega) I_B$$
with the convention $ 0 \cdot \infty =0$. Property 6 directly   follows from  assumption c), so
  $\bt$ is a CPM. \\
\indent This set up covers, among others, the next cases.
\begin{enumerate}
  \item Conditional expectation with respect to $Q\ll P$, $Q_{|\Ft}\sim P_{|\Ft}$. Let $ Z_T:=\frac{dQ}{dP}$, call $Z_t =E_t[Z_T]$, pick two versions of these variables and set $U(x,\omega) = x \frac{Z_T}{Z_t}(\omega)$. Assumptions  a), b) and c)  hold   as $E_t[U(\cdot, \omega)] = E_t^Q[\cdot]$ and  is thus the identity on $\Lbbt$.
  \item Conditional expected utility. When $U(x, \omega) =  U(x)$ and $U(\infty)=\sup_x U(x)$, and $\inf_x U(x)= -\infty$ 
we are  in the classic, non random utility case.  Assumption a) and b) are trivially satisfied.
   By concavity $U$ is strictly increasing up to $ \overline{x} = \inf \{ x \mid U(x)=z_u =U(+\infty)\}$.  Then, for any   $z<z_u  $  $U^{-1}$ 
is well-defined and  $E_t[U(\xi)] = U(\xi) \geq z $  iff $ \xi \geq U^{-1}(z)$, so that  $A^z_t \cap \Lbbt $ is bounded from below.
  \item  Conditional expected utility with random  endowment $ W  \in L^{\infty}_t$. Here $U(x, \omega) =   \widetilde{U}(x +W( \omega))$ where $\widetilde{U}$ is a deterministic utility function as in the previous example.  Again assumption a) and b)  are easily verified for $\bt(X) = E_t[ \widetilde{U}(X +W)]$. Assumption  c) is verified as in the previous example.
  \item Conditional expected exponential utility with random risk aversion. Let
  $$ U(x, \omega) = 1- e^{ - \gamma(\omega) x} $$
 with $  \gamma \in (\Lbbt)_+$ , $\gamma \geq c >0$. The restrictive measurability assumption on $\gamma$ is due to  the financial interpretation as risk aversion is known today, while the  uniform lower bound on risk aversion   guarantees assumption c) is satisfied.
\end{enumerate}
\end{example}

 \begin{example}[Conditional Certainty Equivalent, CCE]\label{CCE}
 Suppose we are given   a stochastic dynamic utility:
  $$ U = U(x,t,\omega), t \in \mathcal{T}  $$
   where for fixed $t$  the utility satisfies the same conditions as in the previous example and  the two extra conditions: $U(\cdot, t, \cdot) = U_t$ is $ \mathcal{B}(\mathbb{R})\times \Ft$-measurable and   $x\rightarrow U(x, t,\omega)$ is a.s. strictly increasing.    Fix then $t<T$ and suppose the assumptions  a) and b)
on $E_t[U(\cdot, T, \omega)]$  hold as in the previous example.    Then,
  the CCE at time $t$   is the  unique variable  $C_t(X)$ such that
 $$  U( C_t(X), t,\omega) =  E_t[ U(X,T, \omega)]  $$
 and it maps $ \LbbT$ in $\Lbbt$.  The CCE  is easily recognized to be a CPM.  From the point of view of the preferences,
 it offers  a different  numerical representation on $\LbbT$  of the  preference order induced by $E_t[U(\cdot, T,\omega)]$. Its economic meaning stems from its interpretation as insurance premium,  called mean value principle (see Gerber \cite[Chapter 5.4]{g}).   For the definition of the  CCE on more general spaces, the reader is referred to  Frittelli and Maggis \cite{fm}.  Kupper and Schachermayer \cite{ks} have shown that the  CCE process plays a considerable role in  dynamic performance measures, see   Section 4 for more on this topic.
 \end{example}

\begin{example}[Gain Loss Ratio and Best Gain Loss Ratio]\label{glr}
 Bernardo and Ledoit (\cite{bl}) introduced the static Gain-Loss Ratio as an alternative to the Sharpe Ratio in performance evaluation.  Cherny and Madan \cite[Section 3.2]{cm} have shown that the  static Gain Loss Ratio
 $$  GLR_0 (X) =   \left  \{ \begin{array}{cc}
                            \frac{E[X]}{ E[X^-]} & \text{ on  } E[X]> 0  \\
                             0  &   \text{ otherwise }
                             \end{array} \right.$$
restricted to $L^{\infty}_T$ is an acceptability index in their sense, i.e. continuous from above.  Note however that   the very definition is slightly inconsistent within their setup, as the map is not  continuous from above at $0$, since on positive, arbitrarily small payoffs it takes the value $+\infty$. It is in fact \emph{continuous from below.}
 The conditional version of GLR is, with the convention $ \frac{\xi}{0} = +\infty $ if  $\xi>0$:
 $$  GLR_t (X) =   \left  \{ \begin{array}{cc}
                            \frac{E_t[X]}{ E_t[X^-]} & \text{ on  } E_t[X]> 0  \\
                             0  &   \text{ otherwise }
                             \end{array} \right.
                             $$
 $GLR_t$ is   a CAI.   It is evidently  well defined on $\LbbT$, $\Lbbt$-valued, monotone, nonnegative and scale invariant.  Quasi concavity follows from convexity of the function $x^-$ and linearity of (conditional)  expectation exactly as in \cite[Section 3.2]{cm}.  Also, $\bt(+\infty) = z_u= +\infty$ and $z_d =0$.  Continuity from below is easily verified, as  from monotone convergence $X_n \in \LbbT$,  $X_n \uparrow X$ implies $E_t[X_n] \uparrow E_t[X]$ and dominated convergence gives $E_t[X_n^-] \downarrow E_t[X^-]$. In particular  the \emph{static $GLR_0$} is  continuous from below.
  Property 6  of CPMs  is also satisfied, since  when   $z>0$
 $$A^z_t\cap \Lbbt = \{ \xi \in \Lbbt \mid \bt(\xi) \geq z \} =\{ \xi \in \Lbbt, \xi>0\}   $$
 We only need to check locality. Note that when $B\in \Ft$, $P(B)\neq 0$:
 $$  \bt (XI_B) =   \left  \{ \begin{array}{cc}
                            \frac{E_t[X]}{ E_t[X^-]} & \text{ on  }  \{E_t[XI_B]>0\}=B \cap \{E_t[X ]>0\}   \\
                             0  &   \text{ otherwise }
                             \end{array} \right.
                             $$
    so that $ \bt(XI_B)I_B = \bt(X)I_B$. \\
    \indent  The Gain Loss Ratio analysis  has been recently extended by Biagini and Pinar \cite{bp}. On a general probability space and in the presence of a market,  they show that the  market optimized  GLR (Best GLR, BGLR) is in turn a static acceptability index as a function of the random endowment $W$:
 $$ BGLR_0 (W) = \sup_{R \in \mathcal{R}} \frac{E[(R+W)]}{ E[(R+W)^-]} $$
 in which $ \mathcal{R}$ are the  claims replicable at zero cost. The conditional version of  BLGR
 $$  BGLR_t (W) = \sup_{R \in \mathcal{R}} \frac{E_t[(R+W)]}{ E_t[(R+W)^-]},  $$
  is also a CAI.
\end{example}

 \begin{example}[Reward to Risk Ratio]\label{rrr}
A generalization of the GLR is   the  conditional  reward to risk  ratio,   under the usual convention $ \frac{\xi}{0} = +\infty $ if  $\xi>0$:
 $$  \bt( X) =  \left  \{ \begin{array}{cc}
                                \frac{E_t[U(X)]}{\sigma_t(X)} & \text{ on  } \{ E_t[U(X)] >0 \}\\
                                0  &   \text{ otherwise }   \\
                             \end{array} \right.  $$
 in which
 \begin{itemize}
   \item  $U$ is a
 concave
utility function, possibly stochastic as in Example \ref{ut},  verifying $U(+\infty)>0$
   \item $\sigma_t:\LbbT \rightarrow \Lbbt$ is a  nonnegative, monotone non increasing,
convex,  local and continuous from below  map, with  $\sigma_t (+\infty) \in (0,\infty)$.
 \end{itemize}
   It can be proved, along the same lines of   the GLR case, that  $\bt$ is a CPM, with $z_u = \frac{U(+\infty)}{\sigma_t(+\infty)}$ and $z_d=0$.    If $U$  and $\sigma$ are  positively homogeneous, then $\bt$ is a CAI. \\
\indent An example is the  ratio of the conditional expectation  to the conditional $p$-th lower partial moment. This has been studied by Leitner \cite{lei-lpm} in the static case, with  $\sigma_0 = E[(X^-)^p]^{\frac{1}{p}}$ and  fixed $p>1$. Its conditional version is thus the CAI
$$  \bt( X) =  \left  \{ \begin{array}{cc}
                                \frac{E_t[X]}{( E_t[(X^-)^p])^{\frac{1}{p}}} & \text{ on  } \{ E_t[X] >0 \}\\
                                0  &   \text{ otherwise }   \\
                             \end{array} \right.  $$
 Another class of CAIs is obtained from conditional coherent risk measures,
see  Bion-Nadal \cite{jbn0} or Detlefsen and Scandolo \cite{ds}  for information  on conditional risks.  Fix a   conditional coherent risk measure $\rho_t: L^{\infty}_T \rightarrow  L^{\infty}_t$ and continuous from below.  Then,  extend it  to $\LbbT$  using the continuity from below  property and finally   set $ \sigma_t : = \rho_t \vee 0$.  To fix the ideas, $\sigma_t$  could be the truncation of the conditional Average Value at Risk at level $ \lambda $, $\lambda \in L^{\infty}_t, 0< \lambda<1$, that is
  $$ \rho_t (X) = \esssup_{\{ Q\mid   Q_{| \Ft} = P_{| \Ft}, \frac{dQ}{dP}\leq \lambda^{-1}\}} E_t^Q[-X]$$
  The resulting   CPM is
 $$  \bt( X) =  \left  \{ \begin{array}{cc}
                                \frac{E_t[U(X)]}{ \rho_t \vee 0} & \text{ on  } \{ E_t[U(X)] >0 \}\\
                                0  &   \text{ otherwise }   \\
                             \end{array} \right.  $$
  which is a CAI when $U$ is positively homogeneous.
  Note that when $t=0$ and $U (x) =x$, this class boils down to the  RAROC class in \cite[Section 3.4]{cm}. The only difference is  we do not set $\bt =+\infty$ where $\rho_t\leq 0$. This choice  in fact ignores the effect of   reward on acceptable claims. Here,  $\bt=+\infty$ on $\rho_t\leq 0, E_t[U(X)]>0$ and null on $ \rho_t\leq 0, E_t[U(X)]\leq 0 $.
 \end{example}

\begin{remark}
In the two examples above, GLR and the more general Reward to Risk Ratio, the index is set to zero when the reward is non positive. This is not surprising, since in the end any performance measure must be optimized. Then of course the optimizer(s) (or at least the maximizing sequence) will have positive reward. For an effective analysis of generalized Sharpe Ratios, the interested reader may consult \v{C}ern\'{y} \cite{c}.
\end{remark}

 \section{Parametric families of conditional convex risk measures and conditional performance measures: a one-to-one relation}
As anticipated in the Introduction,  this section generalizes   and completes   the findings of \cite{bcz, cm, rs}  for  acceptability indexes.

 \subsection{From a CPM to the  induced  family of risks }
\begin{definition}
Let $z_d< z<z_u $ and  consider $A^z_t=\{ X\mid \bt(X)\geq z\}$.  Define
$$ \rtz (X) : =    \essinf \{ \xi \in \Lbbt \mid \bt(X+\xi)\geq z\}=\essinf \{ \xi \in \Lbbt \mid  X+ \xi \in A^z_t \}
  $$
  Note $\rtz$ is in the form of \textbf{capital requirement} with initial set of acceptable position $A^z_t$. The reader unfamiliar with this notion is referred to  \cite[Chapter 4.1]{fs}.
\label{defrho}
 \end{definition}

 The next Lemma shows  some   properties of $\rtz$ and auxiliary results.  Note that if  $M^{z}_t(X): =\{\xi \in \Lbbt \mid \beta_t(X+\xi) \geq z\}$, then $ \rtz(X) =\essinf M^z_t(X)$.
 \begin{lemma} \label{downward}
 Let $z_d < z<z_u$ and  fix $X \in \LbbT$. Then
 \begin{enumerate}
   \item $M^z_t(X)$ contains a bounded variable, it is a sublattice of $\Lbbt$ and is monotone in $X$, namely  $$X\leq Y \Rightarrow M^z_t(X)\subseteq M^z_t(Y)$$
  \item $\rtz(X)\in \Lbat$;
  \item there exists  a  sequence $  \xi_n \downarrow \rtz(X)$, $\xi_n\in M^z_t(X)$;
\item  $\rtz$ is local: $ I_B  \rtz (X) = I_B \,\rtz(X I_B)$ for all $B\in \Ft$;
\item  $\rtz$ is translation invariant on $L^{\infty}_t$: $\rtz(X+ \eta) =  \rtz(X)- \eta$ for all  $\eta\in L^{\infty}_t $
\item   $  \bt(X)> z     \text{ on } B \in \Ft  $ if and only if    $\rtz(X)< 0  \text{ on } B$.   As a consequence,
      \begin{equation} \label{eqlar}
       \bt(X)\leq z     \text{ on } B \in \Ft  \Leftrightarrow    \rtz(X)\geq 0  \text{ on } B
       \end{equation}
 \end{enumerate}
 \end{lemma}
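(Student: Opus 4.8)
The plan is to treat Part~1 as the engine and derive everything else from it. For Part~1, a bounded member of $M^z_t(X)$ is produced from a constant: writing $X\geq m$ for a lower bound and using \eqref{zeta-u}, one has $\bt(X+c)\geq \bt(m+c)\to z_u>z$ as $c\uparrow+\infty$, so a large constant $c$ lies in $M^z_t(X)$. Monotonicity in $X$ is immediate from monotonicity of $\bt$, and closure under $\vee$ is immediate as well since $X+\xi_1\vee\xi_2\geq X+\xi_1$. The only genuine point is closure under $\wedge$: on $B=\{\xi_1\leq\xi_2\}\in\Ft$ one has $\bt(X+\xi_1)\geq z$ and on $B^c$ one has $\bt(X+\xi_2)\geq z$, so the pasting Lemma~\ref{pasting} applied to $X+\xi_1,X+\xi_2$ yields $X+\xi_1\wedge\xi_2\in A^z_t$, i.e. $\xi_1\wedge\xi_2\in M^z_t(X)$. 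Thus $M^z_t(X)$ is a sublattice containing a constant.

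Parts~2,~3 and~5 then follow quickly. For Part~2 the constant $c$ above gives $\rtz(X)\leq c$, and since an essential infimum of $\Ft$-measurable variables is $\Ft$-measurable, $\rtz(X)\in\Lbat$. For Part~3, closure under $\wedge$ means $M^z_t(X)$ is downward directed, so by the standard characterization of the essential infimum of a downward directed family there is a sequence $\xi_n\in M^z_t(X)$ with $\xi_n\downarrow\rtz(X)$. For Part~5, the substitution $\xi\mapsto\xi+\eta$ is a bijection of \Lbbt\ onto itself when $\eta\in\Lt$, and it identifies $M^z_t(X+\eta)$ with $M^z_t(X)-\eta$; pulling the $\Ft$-measurable $\eta$ out of the essinf gives $\rtz(X+\eta)=\rtz(X)-\eta$.

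For Part~4 the key is the identity $I_B\bt(X+\xi)=I_B\bt(XI_B+\xi)$, which follows by applying locality of $\bt$ to both $X+\xi$ and $XI_B+\xi$ and noting $(X+\xi)I_B=(XI_B+\xi)I_B$. I would then run two symmetric pasting arguments. Taking $\xi_n\downarrow\rtz(X)$ in $M^z_t(X)$ and a fixed $\eta_0\in M^z_t(XI_B)$, the identity gives $\bt(XI_B+\xi_n)\geq z$ on $B$, and pasting with $\bt(XI_B+\eta_0)\geq z$ on $B^c$ shows $\zeta_n:=\xi_n I_B+\eta_0 I_{B^c}\in M^z_t(XI_B)$, whence $\rtz(XI_B)\leq\xi_n$ on $B$ and, letting $n\to\infty$, $\rtz(XI_B)\leq\rtz(X)$ on $B$. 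Exchanging the roles of $X$ and $XI_B$ (now with $\eta_n\downarrow\rtz(XI_B)$ and a fixed $\xi_0\in M^z_t(X)$) gives the reverse inequality on $B$, hence $I_B\rtz(X)=I_B\rtz(XI_B)$.

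Part~6 is where I expect the main difficulty, since it is the only place that genuinely uses continuity from below and the strict-monotonicity axiom (Property~3(b)). For the implication $\rtz(X)<0$ on $B\Rightarrow\bt(X)>z$ on $B$, I would take $\xi_n\downarrow\rtz(X)$ in $M^z_t(X)$ and localize on $B_{n,\varepsilon}=B\cap\{\xi_n\leq-\varepsilon\}$, $\varepsilon\in\mathbb{Q}_{>0}$, which cover $B$ a.s.; there $X+\xi_n\leq X-\varepsilon$ gives $\bt(X-\varepsilon)\geq z$, and Property~3(b) with shift $\varepsilon$ upgrades this to $\bt(X)>z$, the boundary case $\bt(X-\varepsilon)=z_u$ being handled directly since then $\bt(X)=z_u>z$. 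For the converse, if $\bt(X)>z$ on $B$, continuity from below along $X-1/k\uparrow X$ gives $\bt(X-1/k)\uparrow\bt(X)$, so the sets $B_k=B\cap\{\bt(X-1/k)\geq z\}\in\Ft$ increase to $B$; pasting $X-1/k$ (acceptable on $B_k$) with $X+\xi_0$ for a fixed $\xi_0\in M^z_t(X)$ (acceptable on $B_k^c$) produces $\xi=-\tfrac1k I_{B_k}+\xi_0 I_{B_k^c}\in M^z_t(X)$, so $\rtz(X)\leq-1/k<0$ on $B_k$ and therefore $\rtz(X)<0$ on $B$. The delicate parts are these measurable localizations and securing strictness; the $z_u$-boundary case in Property~3(b) is the one subtlety to watch. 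Finally \eqref{eqlar} is obtained by reading the proved equivalence as the a.s. set identity $\{\bt(X)>z\}=\{\rtz(X)<0\}$ and passing to complements.
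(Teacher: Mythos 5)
Your proposal is correct and takes essentially the same approach as the paper's proof: pasting (Lemma \ref{pasting}) plus locality for items 1 and 4, the downward-directedness (Neveu) argument for items 2--3, translation of the defining set for item 5, and for item 6 the same ingredients --- the decreasing sequence from item 3, continuity from below along $X-\tfrac{1}{k}\uparrow X$, and Property 3(b) together with the $\bt = z_u$ boundary case. The only deviations are organizational rather than substantive: in item 6 you localize directly on the sets $B\cap\{\xi_n\leq -\varepsilon\}$ where the paper instead pastes an auxiliary position $X'=(X-\tfrac{1}{n})I_{B_n}+(\xi+1)I_{B_n^c}$ and argues globally before localizing back, and in item 1 you use the constant guaranteed by the discussion around \eqref{zeta-u}--\eqref{six-a} instead of a bounded element of $A^z_t\cap L^{\infty}_t$ shifted by a lower bound of $X$.
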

\begin{proof}
\begin{enumerate}

    \item  $M^z_t(X)$ is monotone  by monotonicity of $\bt$.  In fact, $X\geq x$ for some real $x$. From  \eqref{six-a},  as $z_d<z<z_u$  there exists   $\xi$ in  $L^{\infty}_t$ such that  $\bt(\xi)\geq z$;  from monotonicity, $ \bt(X-x +\xi)\geq z$ so that $\xi-x\in M^z_t(X)$ and $\xi-x$ is bounded.  To show closure with respect to lattice properties,  let $\xi, \eta$ in $M^{z}_t(X)$ and  $A=\{\xi \leq \eta\}$.
                    From the locality property:
                    $$\beta_t(X+\inf(\xi,\eta))=1_A\beta_t(1_A(X+\xi))+1_{A^c}\beta_t(1_{A^c}(X+\eta)\geq z$$
                    This proves that $M^{z}_t(X)$ is closed for $\wedge$. The closure for $\vee$ is trivial from the monotonicity of $\beta_t$. Thus it is a sublattice of $\Lbbt$.
    \item  The proof of the previous item shows $\rtz(X) \leq \xi-x$,
{
for some $\xi$ in $L^{\infty}_t$.}
     \item  This follows from  the lattice property of $M^z_t(X)$ and \cite[Chapter VI.1]{nev}. 
 \item  Locality. Fix $B\in \Ft$.   Suppose  $ \bt(XI_B + \xi')\geq z $. In particular, the inequality holds on $B$. 
 In the same way, for any $\xi \in M^z_t(X)$,   $ \bt(X+\xi)\geq z$, and  the inequality holds in particular on $B^c$. Set $X_1 = (X I_B+\xi' ), X_2= (X+\xi)$.  Thanks to Lemma \ref{pasting}, $ X_1 I_B + X_2 I_{B^c} = X + (\xi' I_B + \xi I_{B^c}) \in A^z_t$.  Therefore, for any $\xi' \in M^z_t(X I_B)$, $(\xi' I_B + \xi I_{B^c}) \in M^z_t(X )$ and
      $$ \rtz(X) I_B \leq  (\xi' I_B + \xi I_{B^c}) I_B = \xi' I_B   $$
      so, taking the essinf on $\xi'$ , $ \rtz(X) I_B \leq \rtz(XI_B) I_B$.  The converse inequality is analogous.

\item  Translation invariance by bounded variables   is an immediate consequence of the definition of $\rtz$.
 \item    We prove the first statement.  \begin{itemize}

        \item[$\Rightarrow$] Suppose $\bt(X)>z$ on $B$. Continuity from below gives $\bt(X - \frac{1}{n}) \uparrow \bt(X)$. Set $B_n = \{\omega \in B,  \bt(X - \frac{1}{n})(\omega)>z\}$. Corollary \ref{simple},  locality and translation invariance by bounded variables  of $\rtz$  give
            $$ \rtz(X)I_{B_n} \leq - \frac{1}{n} I_{B_n}$$
            Thus  $ \rtz(X) <0$ on  $B_n$    and, passing to the limit, $\rtz(X)<0$ on $B$.

        \item[$\Leftarrow$]  Suppose $\rtz(X)< 0$ on $B$. For all $n \in \mathbb{N}^*$, let $B_n=\{  \rtz(X) < - \frac{1}{n} \} \cap B$. Then $B_n\uparrow B$ and from translation invariance, $\rtz(X-\frac{1}{n})<0$ on $B_n$.
        From  \eqref{six-a} there is a (bounded) $\xi$ such that $\beta_t(\xi) \geq z$. From translation invariance of  $\rtz$,  $\rho^z_t(\xi+1) \leq -1<0$.  Set $X' =(X-\frac{1}{n}) I_{B_n}+(\xi+1)I_{{B_n}^c}$. Then, locality of the risk map implies $\rtz(X')<0$. Select an approximating sequence   $\xi_k \downarrow \rtz(X')$ as in item 3 above.  By locality and monotonicity,
        $$  \bt(X')  I_{\{ \xi_k < 0\}}\stackrel{loc}{=} \bt(X' I_{\{ \xi_k < 0\}} )  I_{\{ \xi_k < 0\}} \geq  \bt( (X' +\xi_k) I_{\{ \xi_k < 0\}})  I_{\{ \xi_k < 0\}}  \geq z I_{\{ \xi_k < 0\}} $$
        and,  sending $k$ to infinity, we get $\bt(X')\geq z$.
         Property 3-b) of CPMs (strict monotonicity)    ensures  $\bt(X'+\frac{1}{n}) >  \bt(X') $ on $   \{ \bt(X' )<z_u \} \cap \{ \bt( X'+\frac{1}{n})>z_d\} $.  The locality property of $\bt$ and $z>z_d$ gives then  $\bt(X) = \bt(X' + \frac{1}{n}) > z $ on $B_n \cap   \{ \bt(X' )<z_u \} $. This result is satisfied  for all $n$, and since on $  \{ \bt(X' )= z_u \} $ by monotonicity  $\bt(X)=z_u >z$,  then  $\bt(X)> z $ on $B$.

        \end{itemize}
        To show \eqref{eqlar}, suppose $\rtz(X)\geq 0$ on $B$. Let $A=\{  \bt(X) > z\}\cap B \in \Ft$. On $A$,   $\rtz(X)< 0$. It follows that $P(A)=0$. Thus   $\bt(X)\leq  z $ on $B$. The proof of the other implication is similar.
\end{enumerate}
\end{proof}

\begin{proposition}\label{induced-risks}
\begin{itemize}
\item[i)] For  fixed  $z_u > z>z_d$,  consider  the   map $\rtz$. Then, $ \rtz : \LbbT \rightarrow \Lbat$  and is
 \begin{enumerate}
  \item $L^{\infty}_t$-valued if restricted to $\L^{\infty}_T$,

  \item   convex;

  \item monotone non increasing, and  it can be calculated as:
      \begin{equation}\label{bounded} \rtz(X)= \essinf \{ \xi  \in L^{\infty}_t \mid X+ \xi \in A_t^z\}
      \end{equation}

  \item local, $ I_B  \rtz (X) = I_B \,\rtz(X I_B)$ for all $B\in \Ft$

  \item translation invariant on $\Lbbt$:  $\rtz(X+\xi ) = \rtz(X)-\xi$,  for all  $\xi \in \Lbbt$;

   \item continuous from below:    $$ X_n \uparrow X \Rightarrow  \rtz(X_n)\downarrow \rtz(X) $$

\end{enumerate}
    and is therefore a conditional convex risk measure on $\LbbT$, continuous from below.

    \begin{itemize}
      \item[7.] If  $\bt$ is    scale invariant (in particular, when $\bt$ is a CAI),   $\rtz$ is a conditional coherent risk measure on $\LbbT$.
    \end{itemize}

 \item[ii)]  Moreover,  for any fixed  $X \in \LbbT$, the process $ \rtz(X)_{z_d<z<z_u}$ is  a non decreasing  and continuous function of $z \in (z_d,z_u)$. \\
In case  $z_d=-\infty$,
   \begin{equation}\label{limit}
        \ \lim_{z\downarrow -\infty} \esssup \rtz(0) =-\infty
   \end{equation}

\end{itemize}

\end{proposition}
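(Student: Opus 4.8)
The plan is to derive essentially all of part i) from Lemma \ref{downward}, upgrading its bounded statements to all of \LbbT\ by a single truncation-and-shift device, and to settle part ii) through the duality between \rtz\ and \bt. The purely algebraic properties come almost for free. Monotonicity (item 3) is exactly the monotonicity of $M^z_t$ in $X$ (Lemma \ref{downward}.1): $X\le Y$ gives $M^z_t(X)\subseteq M^z_t(Y)$, hence $\rho^z_t(Y)\le\rtz(X)$. Convexity (item 2) follows from quasi-concavity of \bt: if $\bt(X+\xi)\ge z$ and $\bt(Y+\eta)\ge z$ then $\bt(c(X+\xi)+(1-c)(Y+\eta))\ge z$, so $c\xi+(1-c)\eta\in M^z_t(cX+(1-c)Y)$; choosing the decreasing approximants $\xi_n\downarrow\rtz(X)$, $\eta_n\downarrow\rho^z_t(Y)$ of Lemma \ref{downward}.3 and letting $n\to\infty$ yields the convexity inequality. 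Locality (item 4) and translation invariance on \Lt\ are Lemma \ref{downward}.4--5. For coherence (item 7) I note that scale invariance $\bt(cX)=\bt(X)$ forces $M^z_t(cX)=cM^z_t(X)$ for $c>0$, whence $\rtz(cX)=c\rtz(X)$; with convexity this makes \rtz\ coherent.

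The analytic core --- formula \eqref{bounded}, continuity from below (item 6), the full translation invariance (item 5) and item 1 --- all rest on one device. Given $\xi\in\Lbbt$ with $\bt(X+\xi)\ge z$, the truncations $\xi\wedge k$ lie in \Lt, and $X+(\xi\wedge k)\uparrow X+\xi$. A positive shift upgrades the acceptance to a \emph{strict} inequality: by strict monotonicity (Property 3) together with $z<z_u$ one gets $\bt(X+\xi+1/m)>z$, and continuity from below of \bt\ then gives $\bt(X+\xi\wedge k+1/m)\uparrow\bt(X+\xi+1/m)>z$, so the sets $D_k:=\{\bt(X+\xi\wedge k+1/m)\ge z\}$ increase to $\Omega$. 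Pasting (Lemma \ref{pasting}) the bounded shift $\xi\wedge k+1/m$ on $D_k$ with a large constant $a$ satisfying $\bt(X+a)\ge z$ on $D_k^c$ produces a \emph{bounded} competitor, so the \Lt-infimum in \eqref{bounded} is $\le\xi+1/m$ on $D_k$; letting $k,m\to\infty$ and taking $\essinf$ over $\xi$ proves \eqref{bounded}. The same strict-shift argument, applied to a bounded approximant $\xi^\ast$ of $\rtz(X)$ (now available from \eqref{bounded}) along a sequence $X_n\uparrow X$, gives via \eqref{eqlar}, Corollary \ref{simple}, locality and translation invariance on \Lt\ that $\rho^z_t(X_n)\le\xi^\ast+1/k$ on sets increasing to $\Omega$, hence $\rho^z_t(X_n)\downarrow\rtz(X)$, i.e. continuity from below. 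Item 1 is then immediate: for bounded $X$ every $\xi\in M^z_t(X)$ forces $X+\xi\ge x_z$ by Property 6 of the CPM, so $\rtz(X)\ge x_z-\|X\|_\infty$ is bounded below, while it is bounded above by Lemma \ref{downward}.2; and full translation invariance on \Lbbt\ (item 5) follows from the \Lt-case by writing $\xi=\lim_k(\xi\wedge k)$ and invoking the continuity from below just obtained.

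For part ii), monotonicity of $z\mapsto\rtz(X)$ is immediate: $z<z'$ gives $A^{z'}_t\subseteq A^z_t$, hence $M^{z'}_t(X)\subseteq M^z_t(X)$ and $\rtz(X)\le\rho^{z'}_t(X)$. Continuity I would read off the identity
$$\{\rtz(X)<c\}=\{\bt(X+c)>z\},\qquad c\in\mathbb{R},$$
obtained from translation invariance and Lemma \ref{downward}.6. This exhibits $z\mapsto\rtz(X)$ as the generalized inverse of $c\mapsto\bt(X+c)$, a map that is non-decreasing, left-continuous (by continuity from below of \bt) and, crucially, \emph{strictly} increasing wherever its values lie in $(z_d,z_u)$ by Property 3. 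A jump of $z\mapsto\rtz(X)$ at some $z_0\in(z_d,z_u)$ on a set of positive measure would, through the displayed identity evaluated at rational shifts $c$ inside the jump interval, force $\bt(X+c)=z_0$ simultaneously for a whole interval of $c$, contradicting strict monotonicity; hence there are no jumps and $\rtz(X)$ is continuous in $z$. Finally, when $z_d=-\infty$, since $z_d=\lim_{x\downarrow-\infty}\bt(x)$ one picks constants $x_z$ with $\bt(x_z)\ge z$ and $x_z\downarrow-\infty$ as $z\downarrow-\infty$; then $\rtz(0)\le x_z$, so $\esssup\rtz(0)\le x_z\to-\infty$, which is \eqref{limit}.

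The step I expect to be the main obstacle is continuity in $z$: the generalized-inverse picture is transparent pointwise, but making it rigorous in the almost-sure, conditional setting requires describing the (random) jump set through countably many rational shifts, controlling null sets under the suprema and infima over $z$, and verifying that strict monotonicity genuinely applies throughout the relevant value range $(z_d,z_u)$. The truncation-and-strict-shift lemma underlying \eqref{bounded} and continuity from below is the second delicate point, mainly in keeping the pasted competitors bounded when $X$ itself is unbounded.
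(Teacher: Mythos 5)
Most of your argument is correct, and in several places you take a genuinely different route from the paper. For formula \eqref{bounded} the paper uses a one-line argument: $M^z_t(X)$ is a sublattice containing a bounded element $\xi^*$ (Lemma \ref{downward}, item 1), so $\{\xi\wedge\xi^*\mid \xi\in M^z_t(X)\}$ consists of bounded elements of $M^z_t(X)$ with the same essential infimum. Your truncation-and-strict-shift device with pasting is a valid but much heavier substitute; the strict upgrade $\bt(X+\xi+1/m)>z$ is legitimate, since on $\{\bt(X+\xi)<z_u\}$ Property 3(b) applies (the qualification $\bt(X+\xi+1/m)>z_d$ being automatic from $\bt(X+\xi)\geq z>z_d$), while on $\{\bt(X+\xi)=z_u\}$ plain monotonicity suffices. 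Your constructive proof of continuity from below is likewise a sound alternative to the paper's contradiction argument on the sets $C_q=\{\rtz(X)<q\leq\lim_n\rtz(X_n)\}$, and your ordering (\eqref{bounded}, then item 6, then items 1 and 5) is free of circularity; deriving item 5 by truncating $\xi$ and invoking item 6 even absorbs the case $P(\xi=+\infty)>0$, which the paper treats separately via $M^z_t(+\infty)=\Lbbt$. Part ii) is essentially the paper's own proof: a jump at $z_0$ forces $\bt(X+c)=z_0$ for an interval of shifts $c$, contradicting Property 3(b); and your choice of constants $x_z\downarrow-\infty$ for \eqref{limit} mirrors the paper's choice of levels $z_n$ with $\bt(-n)\geq z_n$.

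There is, however, a genuine error in your item 1 --- ironically the one step you called ``immediate''. You claim that for bounded $X$, every $\xi\in M^z_t(X)$ forces $X+\xi\geq x_z$ ``by Property 6''. Property 6 bounds only $A^z_t\cap\Lbbt$, i.e.\ the \emph{$\Ft$-measurable} bounded-from-below elements of the upper level set, whereas $X+\xi$ is merely $\FT$-measurable; the full set $A^z_t$ is not uniformly bounded below in general. Concretely, take the Gain-Loss Ratio with trivial $\mathcal{F}_0$ and $z=1$, so that $x_z=0$: the bounded variable $X$ taking values $-10$ and $+20$ with probability $1/2$ each has $GLR_0(X)=1$, hence $\xi=0\in M^z_0(X)$, yet $X+\xi=X$ is not $\geq 0$. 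Your conclusion $\rtz(X)\geq x_z-\|X\|_{\infty}$ is nevertheless true, and the repair is exactly the paper's argument: by monotonicity $M^z_t(X)\subseteq M^z_t(\|X\|_{\infty})$, and for $\xi\in M^z_t(\|X\|_{\infty})$ the variable $\|X\|_{\infty}+\xi$ \emph{is} in $A^z_t\cap\Lbbt$, so Property 6 yields $\xi\geq x_z-\|X\|_{\infty}$ and therefore $\rtz(X)\geq\rtz(\|X\|_{\infty})\geq x_z-\|X\|_{\infty}$.
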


\begin{proof}
\begin{itemize}
\item[i)]
By Lemma \ref{downward}  item 2,  $\rtz$ is $\Lbat$-valued.
\begin{enumerate}

\item Bounded on bounded positions. Suppose $X\in \L^{\infty}_T$. From  Lemma \ref{downward},
$ M^z_t(\|X\|_{\infty} )\supseteq M^z_t(X)$.  So, $ \rtz(\|X\|_{\infty} ) \leq \rtz(X)$.
  Now, $ \|X\|_{\infty}  +  M^z_t(\|X\|_{\infty} ) \subseteq A_t^z\cap \Lbbt$.  Since $z>z_d$, Property 6 of CPMs ensures that  $A_t^z\cap \Lbbt$ has a uniform  lower bound $x_z$. Therefore $$ x_z -  \|X\|_{\infty}  \leq    \essinf M^z_t(\|X\|_{\infty}) = \rtz(\|X\|_{\infty} ) \leq \rtz(X)$$

\item  Convexity. Fix $X,Y$ and take a   convex combination $ Z= c X + (1-c)Y$ with $c \in [0,1]$. For any $\xi  \in M^z_t(X), \psi \in M^z_t(Y),$ then   quasi concavity of $\bt$  implies
  $$ \bt (Z + (c \xi +(1-c)\psi)\,) =  \bt( c (X+\xi) + (1-c) (Y+\psi)\,)\geq z $$
  that is $ \rtz(Z) \leq c \xi +(1-c)\psi $. Taking   the essential infimum over $\xi$ and $\psi$,  $ \rtz(Z)\leq c \rtz(X) +(1-c) \rtz(Y)$.

\item Monotonicity  (in the opposite direction) follows easily from    monotonicity  of $\bt$.  Since  $M_t^z(X)$ contains a bounded variable $\xi^*$ and is a sublattice of $\Lbbt$ by item 1, Lemma   \ref{downward},  $M':=\{ \xi \wedge \xi^* \mid \xi \in M_t^z(X)\} \subseteq M^z_t(X) \cap L^{\infty}_t$  and $\rtz(X) = \essinf M_t^z(X) = \essinf  M' $.

\item Locality  is proved in  Lemma \ref{downward}, item 4.

\item  A combination of  localization and translation invariance over $L^{\infty}_t$ gives translation invariance for all  $\xi \in \Lbbt$. In fact, suppose first $ \xi <+\infty$.  Then,
       $$ (\rtz   (X ) - \xi) I_{\{  \xi \leq n \}} = (\rtz   (X ) - \xi \wedge n) I_{\{  \xi \leq n \}} =  \rtz\left (X + \xi \wedge n  \right ) I_{\{  \xi \leq n \}} =  \rtz(X + \xi )I_{\{  \xi \leq n \}}$$  for all $n$ and the result follows by letting $n\uparrow \infty$. If $B : =\{ \xi = +\infty\}$ has positive probability, then  $$\rtz( X + \xi)I_B = \rtz(+\infty)I_B = (-\infty) I_B $$
       where the second equality (convention $0\cdot \infty =0$) follows from $M^z_t (+\infty )=\Lbbt$.  Therefore for any $\xi \in \Lbbt$
        $$ \rtz(X + \xi) = \rtz(X + \xi)I_{B^c} + \rtz(X + \xi)I_{B} = (\rtz(X) - \xi)I_{B^c} + (- \infty) I_B =\rtz(X) -\xi $$
        since   $\rtz(X)$ is bounded above and thus on $B$ $ \rtz(X) -\xi  =-\infty$.

\item Continuity from below.  Let $X_n \uparrow X$ and suppose  the set  $C = \{ \rtz(X) < \lim_n \rtz(X_n) \}$ has positive probability.  For $q \in \mathbb{Q}$, let $C_q=\{ \rtz(X)< q \leq\lim_n \rtz(X_n) \}$. Since  $C=\bigcup_{q\in \mathbb{Q}}C_q$,  $C_q$
 has positive probability  for some $q$.
 By translation invariance and Lemma \ref{downward} item 6, such chain has two implications:   $\bt(X +q )>z$ on $C_q$,  and  $\bt(X_n +q ) \leq z$ on $C_q$. From the continuity from below of $\beta_t$ the last inequality implies that $\bt(X +q ) \leq z$ on $C_q$,  a contradiction.
       \end{enumerate}

  \begin{itemize}
   \item[7.] when $\bt$ is scale invariant    for any $k>0$
         $$  \rtz( k X) = \essinf \{ \xi \mid  \bt(k X+\xi)\geq z \}=  k \, \essinf \{ \xi' \mid  \bt(X+\xi')\geq z \} =k\rtz( X) $$
         where the second equality holds thanks to scale invariance of $\bt$.
 \end{itemize}

\item[ii)]  The process $ \rtz(X)_{z_d<z<z_u}$ is non decreasing in $z$ as a simple consequence of  monotonicity of $\bt$.
    In particular for any fixed $X$,  any fixed $z^* \in (z_d, z_u)$, $\rho^z_t(X)$ has a left limit and a right limit in $z^*$. Assume that $\lim_{z\uparrow z^*}\rho^z_t(X) \neq  \lim_{z \downarrow z^*}\rho^z_t(X)$.
    Then there are $p<q$ in $\mathbb{R}$ such that
    $A_{p,q}=\{\lim_{z \uparrow z^*}\rho^z_t(X) <p<q \leq  \lim_{z\downarrow z^*}\rho^z_t(X)\}$ has positive probability.  By Lemma \ref{downward}, item 6, and translation invariance, for all $u<z^*$, $\beta_t(X+p)>u$ and for all $v>z^*$,  $\beta_t(X+q) \leq v$ on $A_{p,q}$. Passing to the limit it follows from the monotonicity of $\beta_t$ that  $\beta_t(X+p)=\beta_t(X+q)=z^*$ on $A_{p,q}$. This is in contradiction with the  strict monotonicity  of CPM on constant positive shifts,  property   3.(b) in Definition \ref{perf}.  \\
    To conclude the proof,  the $z$-monotonicity of  $\rtz(0)$  implies the existence of the limit
    $ l:= \lim_{z \downarrow -\infty} \esssup \rtz(0)$.   For any $n \in \mathbb{N}$ there is $z_n$ such that $ \bt(-n) \geq z_n$ and by monotonicity of $\bt$ we   may assume $z_n$ is non increasing.  So, $\rho_t^{z_n}(0)\leq -n$ and
    $$\lim_{z \downarrow -\infty}  \esssup \rtz(0) \leq \lim_{n\rightarrow +\infty}  \esssup \rho_t^{z_n}(0) =-\infty  $$
\end{itemize}    
\end{proof}

\begin{remark}\label{strict-inclusion}
The inclusion $A^z_t \subseteq \{ \rtz\leq 0\}$ is  always satisfied by definition of $\rtz$, but  may be strict due to lack of  order upper semicontinuity.  So in general $A^z_t$ is only a subset of the acceptance set of $\rtz$. For example,  pick any  static ($t=0$) CAI unbounded above and null over negative variables, e.g. the GLR  from  Example \ref{glr}.   Then, for any $z>0$
$$ \rho_0^z(0) = \essinf \{ c \mid  \beta_0 (c) \geq z\} = 0 $$
{
Indeed  $ \beta_0(\frac{1}{n} ) =+\infty $ from Corollary \ref{scaleinvariance}} and thus $\frac{1}{n}\in  A^z_0 $ for every $n$. But $ \beta_0 (0) =0 $ so that  $0\in \{ \rho^z_0 \leq 0 \} \setminus A^z_0$.
\end{remark}
\begin{remark}\label{usc-risk-map}
Continuity from below for the risk $\rtz$, which is monotone non increasing, is equivalent to order upper semicontinuity wrt sequences uniformly bounded from below:
$$ X_n \geq c, \ |X_n -X| \leq Y_n \downarrow 0 \Rightarrow  \rtz(X)\geq \limsup_n \rtz(X_n)   $$
The proof, mutatis mutandis, is identical to that of Lemma \ref{c.above-lsc}.
\end{remark}

\begin{corollary}\label{rep}

For all $z\in (z_d, z_u)$, the map $ \rho^z_t$ admits the following dual representation on $L^{\infty}_T$
\begin{equation}
\rho^z_t(X) = \mathrm{ess}\max_{Q\in{\cal Q} } \, ( E^Q_t[-X] -\alpha_t(Q) ),
\label{eqrep1}
\end{equation}
where: ${\cal Q}$ is the set of probability measures   on $(\Omega,{\cal F}_T)$  absolutely continuous wrt $P$ and  such that  $Q|_{\Ft}=P|_{\Ft}$; and $\alpha_t$ is the penalty function
 $$\alpha_{t}(Q)=\esssup_{Z \in L^{\infty}({\cal F}_T)}(E^Q_t[-Z] -  \rho^z_t(Z))$$
 Therefore, the restriction of the  risk measure to $L^{\infty}_T$ is   \emph{ continuous from above}. Moreover, for all $X$  in $L^{bb}_T$,
\begin{equation}
\rho^z_t(X) = \lim_{n \rightarrow \infty} ( E^{Q_n}_t[-(X \wedge n)] -\alpha_t(Q_n) )
\label{eqrep2}
\end{equation}
for a suitable sequence  $Q_n \in{\cal Q}$.
\end{corollary}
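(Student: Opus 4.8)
The plan is to prove the representation first on the bounded variables $\LT$, then transport it to $\LbbT$ by continuity from below. By Proposition \ref{induced-risks} the restriction of $\rtz$ to $\LT$ is $\Lt$-valued (item~1) and is a genuine conditional convex risk measure: convex (item~2), monotone non increasing (item~3), local (item~4), translation invariant over $\Lt$ (item~5), and continuous from below (item~6). Since continuity from below is exactly the regularity property under which a conditional convex risk measure on $L^\infty$ admits a robust dual representation with the essential supremum attained, I would invoke the conditional robust representation theorem --- see Detlefsen and Scandolo \cite{ds} and Bion-Nadal \cite{jbn0}, with F\"ollmer and Schied \cite[Section~4.2]{fs} as the static prototype.

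Applied to $\rtz$ on $\LT$ this yields an ess max representation over a set of absolutely continuous measures with conjugate penalty. To pin down the dual set as $\mathcal{Q}$, I would use the two invariances: monotonicity forces any measure with finite penalty to be absolutely continuous with respect to $P$, while translation invariance over $\Lt$ forces $E^Q_t[\eta]=\eta$ for every $\eta\in\Lt$, i.e. $Q|_{\Ft}=P|_{\Ft}$. The penalty is then the conjugate $\alpha_t(Q)=\esssup_{Z\in\LT}(E^Q_t[-Z]-\rtz(Z))$, which is the stated formula, and \eqref{eqrep1} follows on $\LT$ with the ess max attained.

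Continuity from above on $\LT$ is then a soft consequence of \eqref{eqrep1}. For $X_n\downarrow X$ in $\LT$, conditional monotone convergence gives $E^Q_t[-X_n]\uparrow E^Q_t[-X]$ for each $Q\in\mathcal{Q}$, so $E^Q_t[-X_n]-\alpha_t(Q)\uparrow E^Q_t[-X]-\alpha_t(Q)$; exchanging this countable increasing limit with the essential supremum over $Q$ (licit for monotone families) yields $\rtz(X_n)\uparrow\rtz(X)$. Equivalently, the mere existence of a robust representation is equivalent to the Fatou property, hence to continuity from above, by the standard equivalence.

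Finally, for \eqref{eqrep2} I would truncate: given $X\in\LbbT$, the variables $X\wedge n$ lie in $\LT$ and $X\wedge n\uparrow X$, so continuity from below of $\rtz$ on $\LbbT$ (Proposition \ref{induced-risks}, item~6) gives $\rtz(X\wedge n)\downarrow\rtz(X)$. Picking, for each $n$, a measure $Q_n\in\mathcal{Q}$ attaining the ess max in \eqref{eqrep1} for $X\wedge n$ and letting $n\to\infty$ produces \eqref{eqrep2}. The main obstacle is the attainment of the ess max in \eqref{eqrep1}: passing from ess sup to ess max requires the full strength of continuity from below (compactness of the penalty sublevel sets), and it is the transfer of this static fact to the conditional setting that demands care; identifying $\mathcal{Q}$ and $\alpha_t$, deducing continuity from above, and the truncation step for \eqref{eqrep2} are then routine.
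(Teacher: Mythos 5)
Your proposal is correct and takes essentially the same route as the paper: both rest on citing the known robust representation theorem for conditional convex risk measures that are continuous from below (Bion-Nadal, Detlefsen--Scandolo) to obtain \eqref{eqrep1} on $L^{\infty}_T$ with the ess max attained, both observe that this representation yields continuity from above, and both derive \eqref{eqrep2} by the truncation $X\wedge n\uparrow X$ combined with continuity from below, choosing a maximizing $Q_n$ for each $X\wedge n$. The additional details you give (identifying $\mathcal{Q}$ via translation invariance, and the exchange of the monotone limit with the essential supremum) merely elaborate steps the paper treats as well known.
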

\begin{proof}
The first assertion is the well known representation result for conditional risk measures continuous from below, see Bion-Nadal \cite{jbn0,BN2008}.  It is well known that the representation \eqref{eqrep1}  implies  \emph{ continuity from above} of the restriction of the  risk measure to $L^{\infty}_T$. Finally,
since $\rho^z_t(X)= \lim_n \rtz(X\wedge n) $   \eqref{eqrep2} follows directly from \eqref{eqrep1} and continuity from below.
 \end{proof}

\begin{definition}\label{rtz}

The family $(\rtz)_{z_d<z<z_u}$  is called the family of risk measures induced by the CPM $\bt$.
\end{definition}

\begin{example}
 The  family induced by  conditional exponential utility with random risk aversion, $\bt(X)= E_t[ 1- e^{-\lambda_t X}] $ as in Example \ref{ut}, item 4, is perhaps the simplest example of all.  It  is a family of   conditional entropic risk measures  where the parameter $z$  gives only a translation effect.   In fact,   fix the risk aversion $\lambda_t \in L^{\infty}_t,  \lambda_t$ positive and bounded away from zero. Then, $z_d = -\infty, z_u= 1 $ and, for any $z\in (z_d, z_u)$,     $\rtz (X) =\essinf \{ \xi \mid  E_t[ 1- e^{-\lambda_t (X+\xi)}]\geq z \}$. Explicitly:
 $$ \rtz(X) = \frac{\ln  E_t[e^{-\lambda_t X} ] }{\lambda_t} - \frac{\ln(1-z)}{\lambda_t} $$
 Note $\rtz(0)=  - \frac{\ln(1-z)}{\lambda_t}$, so these risk measures are not normalized. For the level $z=0$ we   recover the normalized  entropic risk measure.
\end{example}

\bigskip

Knowing the whole  family of risks induced by a  CPM  amounts to the knowledge of the  CPM itself.

\begin{proposition}\label{reconstruction}
The CPM $\bt$ can be reconstructed from the induced family of risks $ (\rtz)_{z_d<z<z_u}$. In fact,  suppose first  $z_d \in \mathbb{R}$ and set   $B_X := \{ \bt (X)=z_d\}$.  Then

 \begin{enumerate}
   \item  $B_X= \cap_{z \in (z_d,z_u)} \{ \rtz(X)\geq 0\}$
   \item  with the convention   $0\cdot \infty =0$, $\bt(X)=  z_d  I_{B_X} +   I_{B_X^c} \xi$, where $$ \xi:=\esssup \{  \varphi \in \St \mid z_d< \varphi<z_u,  \varphi = \sum_{i=1}^n  z_i I_{B_i  }, \rtzi(X) < 0   \text{ on } B_i\cap B_X^c \text{   for all } i=1\ldots n \}$$
 \end{enumerate}
In case $z_d =-\infty$, we get the simplified relation $$ \bt(X) = \esssup \{  \varphi \in \St \mid  \varphi<z_u,  \varphi = \sum_{i=1}^n  z_i I_{B_i  }, \rtzi(X) < 0   \text{ on } B_i  \text{   for all } i=1\ldots n \}. $$
\end{proposition}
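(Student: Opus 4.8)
My plan is to base everything on the pointwise dictionary of Lemma~\ref{downward}, item~6, together with its restatement \eqref{eqlar}: for every level $z\in(z_d,z_u)$ and every $B\in\Ft$, $\bt(X)>z$ on $B$ iff $\rtz(X)<0$ on $B$, and $\bt(X)\le z$ on $B$ iff $\rtz(X)\ge 0$ on $B$. Applying \eqref{eqlar} first to $B=\{\bt(X)\le z\}$ and then to $B=\{\rtz(X)\ge 0\}$ yields the a.s.\ identity of level sets $\{\rtz(X)\ge 0\}=\{\bt(X)\le z\}$. Statement~1 then follows by intersecting in $z$: since $z\mapsto\rtz$ is non-decreasing (Proposition~\ref{induced-risks}, part ii), the sets $\{\rtz(X)\ge 0\}$ shrink as $z\downarrow z_d$, so the intersection over the continuum $(z_d,z_u)$ equals the intersection along any sequence $z_n\downarrow z_d$; this reduces it to $\bigcap_n\{\bt(X)\le z_n\}=\{\bt(X)\le z_d\}$, which coincides with $B_X=\{\bt(X)=z_d\}$ because $\bt(X)\ge z_d$ always.

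For the reconstruction formula in Statement~2 I would split $B_X^c$ into the two inequalities $\xi\le\bt(X)$ and $\xi\ge\bt(X)$. The first is immediate: any $\varphi=\sum_i z_iI_{B_i}$ in the defining family satisfies $\rtzi(X)<0$ on $B_i\cap B_X^c$, hence by Lemma~\ref{downward}, item~6, $\bt(X)>z_i=\varphi$ on $B_i\cap B_X^c$; thus $\varphi<\bt(X)$ pointwise on $B_X^c$, and passing to the essential supremum gives $\xi\le\bt(X)$ there. For the reverse inequality I would use that, for each rational $q\in(z_d,z_u)$, the set $D_q:=\{\bt(X)>q\}\cap B_X^c$ is $\Ft$-measurable and carries $\rho^q_t(X)<0$; a simple function equal to $q$ on $D_q$ is then a candidate witnessing $\xi\ge q$ on $D_q$, and since on $B_X^c$ one has $\bt(X)\in(z_d,z_u]$ with $\sup\{q\in\mathbb{Q}\cap(z_d,z_u):q<\bt(X)(\omega)\}=\bt(X)(\omega)$, taking the supremum over such $q$ would recover $\xi\ge\bt(X)$ a.e.\ on $B_X^c$.

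The step I expect to be the genuine obstacle is upgrading the informal candidate ``value $q$ on $D_q$'' to an honest member of the stated family, i.e.\ a simple $\varphi$ with $z_d<\varphi<z_u$ \emph{everywhere} and $\rtzi(X)<0$ on \emph{every} block meeting $B_X^c$. The difficulty is concentrated at the bottom: to keep $\varphi$ admissible one must assign, on $\{\bt(X)\le q\}\cap B_X^c$, values wedged strictly between $z_d$ and $\bt(X)$, and on the part of $B_X^c$ where $\bt(X)$ decreases to $z_d$ without attaining it no finite staircase of levels bounded away from $z_d$ will do. I would attack this with a fine rational grid $z_d<q_1<\cdots<q_m<z_u$, setting $\varphi=q_j$ on $\{q_j<\bt(X)\le q_{j+1}\}\cap B_X^c$ and letting $q_1\downarrow z_d$ so that the residual bottom set is exhausted only in the essential-supremum limit over grids; this is exactly where the convention at $z_d$ and the essential (as opposed to pointwise) nature of the supremum must be used carefully. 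The case $z_d=-\infty$ is by contrast clean and explains the simpler displayed formula: there $B_X=\emptyset$ and, since $\bt(X)\in\Lbbt$ is automatically bounded below by a constant $c_X>-\infty=z_d$, a single finite staircase starting below $c_X$ already covers the whole range and the bottom-level obstruction never appears.
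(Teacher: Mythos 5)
Your handling of item 1 and of the inequality $\xi\le\bt(X)$ on $B_X^c$ is correct and coincides with the paper's own argument: the paper settles item 1 by the same double application of \eqref{eqlar}, and obtains $\varphi<\bt(X)$ on $B_X^c$ for admissible $\varphi$ from Lemma \ref{downward}, item 6, exactly as you do. Your diagnosis of the case $z_d=-\infty$ is also the paper's: since $\bt(X)\in\Lbbt$ is bounded below by a constant $c_X>z_d=-\infty$, a finite staircase whose lowest level lies below $c_X$ is admissible, and the bottom-level obstruction never appears.

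The genuine gap is precisely the step you flagged, and your proposed repair cannot close it. Admissibility is a \emph{global} constraint on each simple $\varphi$: every member of the family satisfies $\varphi\ge\min_i z_i>z_d$ everywhere, and the condition $\rtzi(X)<0$ on $B_i\cap B_X^c$, i.e.\ $\bt(X)>z_i$ there, forces $\bt(X)>\min_i z_i$ a.e.\ on $B_X^c$. Consequently, if $P(B_X^c\cap\{\bt(X)<z_d+\epsilon\})>0$ for every $\epsilon>0$, the family in item 2 is \emph{empty} --- there is nothing to take an essential supremum of --- and letting the lowest grid point $q_1\downarrow z_d$ produces no limit, because no single grid ever yields an admissible $\varphi$. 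This situation occurs for CPMs covered by the paper: for the Gain-Loss Ratio of Example \ref{glr} (where $z_d=0$), take $\eta$ uniform on $(0,1)$ and $\Ft$-measurable, $\epsilon=\pm1$ independent of $\Ft$, and $X=\eta+2\epsilon$; then $\bt(X)=2\eta/(2-\eta)>0$ everywhere, so $B_X=\emptyset$, yet $\essinf\bt(X)=0$, so the admissible family is empty while $\bt(X)$ is a nonconstant strictly positive variable. To be fair to you, the paper's proof has the very same hole: it passes from $\esssup\{\varphi\in\St\mid\varphi<\bt(X)\}$ (which does equal $\bt(X)$) to the subfamily with the added constraint $z_d<\varphi<z_u$, justifying this as ``simply the approximation of a bounded from below claim by bounded variables''; it is exactly that added lower constraint which empties the family in the example above. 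So the proposition as literally stated holds only when $z_d=-\infty$ or when $\bt(X)$ is essentially bounded away from $z_d$ on $B_X^c$; the reconstruction that is robust in general is the level-set identity $\{\rtz(X)<0\}=\{\bt(X)>z\}$ of Lemma \ref{downward}, item 6, which gives a.e.\ $\bt(X)=z_d\vee\sup\{q\in\mathbb{Q}\cap(z_d,z_u)\mid\rho^q_t(X)<0\}$ --- and your ``pointwise dictionary'' opening shows you already have that statement in hand.
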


\begin{proof}
Item 1 is a consequence of \eqref{eqlar}. To prove item 2, note that  by localization we can assume $\bt(X)>z_d$ everywhere.  The following proof will then hold also in case $z_d=-\infty$.    Note
\begin{eqnarray*}
   \bt(X) &=& \esssup \{ \varphi \in \St \mid \varphi <   \bt(X) \}\\
 & = & \esssup \{   \varphi \in \St \mid z_d<\varphi <z_u,  \varphi = \sum_{i=1}^n  z_i I_{B_i},  \bt(X) > z_i  \text{ on } B_i \text{  for all } i=1\ldots n \}\\
    & =  & \esssup \{   \varphi \in \St \mid  z_d< \varphi <z_u,  \varphi = \sum_{i=1}^n  z_i I_{B_i},  \rtzi(X) < 0   \text{ on } B_i \text{  for all } i=1\ldots n \}\\
    & =&\bt(X)
\end{eqnarray*}
in which  the  first two equalities give  simply the approximation of a bounded from below claim by bounded variables, while   the last two  hold thanks to  Lemma \ref{downward}, item 6 first part.
\end{proof}

 \subsection{From a parametric family of risks to the induced CPM}
\label{sectcc}
A one-parameter family of conditional risk measures $(\stz)_{z}$ which satisfies certain regularity assumptions induces a CPM. The idea is to take the relations in items 1, 2 of Proposition \ref{reconstruction} as  definitions  of a map $\bt$ and  to show  it is indeed a CPM.

\begin{definition}[Standard family of risk measures]
   Fix a non empty open interval $I = (z_d,z_u)\subseteq \mathbb{R}$, possibly unbounded. Let $(\stz)_{z\in I}$ be  a parametric family of conditional convex risk measures continuous from below on $\LbbT$, non decreasing and  continuous in the parameter $z$, i.e. a collection of maps satisfying the conditions:

\begin{itemize}
  \item[a)] For fixed $z\in I$,  $\stz: \LbbT \rightarrow \Lbat$ is:
           \begin{enumerate}
             \item   $L^{\infty}_t$-valued when restricted to $L^{\infty}_T$;
             \item convex, i.e. $\forall c \in [0,1]$, $\forall X,Y \in \LbbT$, $\stz(cX+(1-c)Y) \leq c\stz(X)+(1-c)\stz(Y)$;
             \item monotone non increasing
             \item translation invariant on $\Lbbt$: $\stz(X+ \xi) = \stz(X)-\xi$ whenever $\xi \in \Lbbt$;
             \item local:  $\stz(X) I_B = \stz(XI_B) I_B$ for all $B\in \Ft$;
             \item continuous from below: $X_n \uparrow X \Rightarrow \stz(X_n)\downarrow \stz(X)$.

           \end{enumerate}
  \item[b)] For fixed $X \in \LbbT$,    the paths of the process  $ ( \stz(X))_z$  are almost surely  non decreasing and   continuous on $I$.
\item[c)]   If $z_d=-\infty$,  then $\lim_{z\rightarrow -\infty } \esssup \stz(0)  =  -\infty$.

      \end{itemize}

Then, the family of risk measures  $(\stz)_{z\in I}$  is called standard.
\end{definition}

\begin{proposition}  \label{risk2beta}
Let $(\stz)_{z\in I}$ be a standard family of  conditional risk  measures.    Then,  with the convention $0 \cdot \infty =0$ define on $\LbbT$ the map $\bt$ as
\begin{equation}\label{risk2beta-eq}
 \bt(X):=z_d I_{B_X} + I_{B_X^c} \esssup\{\varphi \in \St \mid   z_d< \varphi <z_u,   \varphi =\sum_1^n z_i I_{B_i}, \stzi(X)<0 \text{ on } B_i  \cap B_X^c \}
  \end{equation}
in which   $B_X:= \cap_{z\in I} \{\stz(X) \geq 0 \}$.
The map $\bt$ is a CPM,  called the CPM  {\bf generated  by} $(\stz)_{z\in I}$,  and the thresholds $z_d, z_u$ are the essential infimum and supremum of $\bt$.    If  $  (\stz)_{z \in I}$ is a family of conditional coherent (i.e. sublinear) risk measures and if $z_d=0$,  $\bt$ is a CAI.

\end{proposition}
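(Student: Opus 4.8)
The plan is to compress everything into one clean description of the candidate map together with two ``threshold equivalences'', and then read off the six axioms of Definition \ref{perf} from the corresponding properties of the standard family, essentially reversing the route of Lemma \ref{downward} and Proposition \ref{induced-risks}.

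\emph{Step 1 (the crux).} I would introduce the pointwise threshold $\phi_X(\omega):=\sup\{z\in I \mid \stz(X)(\omega)<0\}$, with the convention $\sup\emptyset=z_d$. Since by condition b) the map $z\mapsto \stz(X)(\omega)$ is nondecreasing and continuous, its sublevel structure gives, for a.e. $\omega$ and every $z\in I$,
$$\stz(X)(\omega)<0 \iff z<\phi_X(\omega),\qquad \stz(X)(\omega)\le 0\iff z\le\phi_X(\omega).$$
I then check that the right-hand side of \eqref{risk2beta-eq} equals $\phi_X$ a.s.: the constraint $\stzi(X)<0$ on $B_i\cap B_X^c$ forces every admissible simple $\varphi$ to satisfy $\varphi<\phi_X$ on $B_X^c$, while conversely any rational $q\in I$ with $q<\phi_X$ can be assembled into admissible simple functions (partitioning $B_X^c$ by the level sets of $\phi_X$ along a rational grid), so the essential supremum recovers $\phi_X$; on $B_X=\{\phi_X=z_d\}$ the formula returns $z_d=\phi_X$. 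This is where a1) and c) enter: a1) gives $\stz(0)\in\Lt$, and when $z_d=-\infty$ condition c) furnishes a \emph{non-random} $z$ with $\esssup\stz(0)$ below any lower bound of $X$, forcing $\phi_X\ge z>-\infty$, i.e. $\beta_t=\phi_X$ is genuinely $\Lbbt$-valued. The equivalences then read as the reverse analog of Lemma \ref{downward}, item 6:
$$\{\beta_t(X)>z\}=\{\stz(X)<0\},\qquad \{\beta_t(X)\ge z\}=\{\stz(X)\le 0\}.$$

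\emph{Step 2 (most axioms).} Each verification reduces to a ``union over rational $z$'' argument, since $\{\beta_t(X)<\beta_t(Y)\}=\bigcup_q\{\beta_t(X)<q\le\beta_t(Y)\}$. Quasi-concavity (1) follows from convexity of $\stz$: on $\{\beta_t(X)>z\}\cap\{\beta_t(Y)>z\}$ both risks are negative, hence so is $\stz(cX+(1-c)Y)$. Monotonicity (3a) follows from $\stz$ being nonincreasing; continuity from below (4) from $\stz(X_n)\downarrow\stz(X)$, using $\{\stz(X)<0\}=\bigcup_n\{\stz(X_n)<0\}$; locality (5) from locality of $\stz$, which yields $\phi_X=\phi_{XI_B}$ on $B$. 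For the thresholds (2), $\beta_t(+\infty)=z_u$ because $\stz(+\infty)=\lim_n(\stz(0)-n)=-\infty<0$ for all $z$, while $\essinf_X\beta_t=z_d$ because the constants $-n$ satisfy $\beta_t(-n)\le q$ for $n$ large (as $\stz(-n)=\stz(0)+n\ge 0$ once $n$ exceeds the bounded variable $-\stz(0)$, using a1)); non-randomness of $z_d,z_u$ and $z_d<z_u$ are then immediate from $I=(z_d,z_u)$ being a nonempty open interval.

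\emph{Step 3 (the delicate points and the CAI case).} Strict monotonicity (3b) and the uniform lower bound (6) remain. For 3b, I argue by contradiction as in Proposition \ref{induced-risks} ii): if $\phi_{X+c}=\phi_X=:v$ on a positive-measure subset of $\{\beta_t(X)<z_u\}\cap\{\beta_t(X+c)>z_d\}$, there $v\in I$, so continuity in $z$ gives $\sigma^v_t(X)=0=\sigma^v_t(X+c)$, whereas translation invariance gives $\sigma^v_t(X+c)=\sigma^v_t(X)-c=-c\ne 0$, a contradiction. For Property 6, given $z\in I$ take the non-random $x_z$ strictly below $\essinf\stz(0)$ (finite by a1)); if some $\xi\in\Lbbt$ had $B=\{\xi<x_z\}$ of positive measure, dominating $\xi I_B$ by $x_zI_B+KI_{B^c}$ and combining monotonicity, locality and translation invariance yields $\stz(\xi)\ge\stz(0)-x_z>0$ on $B$, so $\xi\notin A^z_t$; hence every element of $A^z_t\cap\Lbbt$ is $\ge x_z$. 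Finally, in the coherent case with $z_d=0$: nonnegativity is just $\beta_t\ge z_d=0$, and scale invariance follows from $\stz(kX)=k\stz(X)$ for $k>0$, which preserves the sign of $\stz$ and hence gives $\phi_{kX}=\phi_X$. I expect Step 1 (identifying \eqref{risk2beta-eq} with $\phi_X$ and extracting the equivalences) to be the main obstacle, with Property 6 the delicate point in Step 3.
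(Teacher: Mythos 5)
Your proposal follows essentially the same route as the paper: your pointwise threshold $\phi_X$ is just a repackaging of the equivalence that the paper isolates as the key step of its own proof (item 3 there: for $z\in I$, $\bt(X)>z$ on $B$ if and only if $\stz(X)<0$ on $B$), and your Steps 2--3 then perform the same verifications in the same way (quasi-concavity from convexity of $\stz$, thresholds from translation invariance plus a1) and c), continuity from below from that of $\stz$, Property 6 from translation invariance, coherence plus $z_d=0$ for the CAI case). Your argument for strict monotonicity 3b --- forcing $\sigma^v_t(X)=0=\sigma^v_t(X+c)$ at the common threshold $v\in I$ and contradicting translation invariance --- is a slightly cleaner execution of the paper's connectedness argument, and is fine.

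However, the second of your two ``crux'' equivalences is false as stated. From a nondecreasing continuous path $z\mapsto\stz(X)(\omega)$ one does get $\stz(X)(\omega)<0\iff z<\phi_X(\omega)$, but only the one-sided implication $z\le\phi_X(\omega)\Rightarrow\stz(X)(\omega)\le 0$ (by left-continuity); the converse fails whenever the path has a flat stretch at level $0$, which the axioms of a standard family do not exclude. Concretely, on $I=(0,3)$ let $\stz(X)=E_t[-X]+c(z)$ with $c(z)=\min(z,1)+\max(z-2,0)$: this is a standard family, and for $X=1$ the path equals $0$ on $[1,2]$, so $\sigma^{3/2}_t(1)\le 0$ while $\phi_1=1<3/2$. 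Hence the set identity $\{\bt(X)\ge z\}=\{\stz(X)\le 0\}$ is wrong in general (note the paper proves and uses only the strict version). This error is not load-bearing: inspecting your Steps 2--3, every verification invokes either the strict equivalence or the valid one-sided implication (the latter only for Property 6), so your proof stands once the false direction is deleted. Finally, a caveat that you share with the paper rather than introduce: your grid construction needs, on every piece of a finite partition of $B_X^c$, a value of $I$ strictly below $\phi_X$; if $z_d$ is finite and $\phi_X$ accumulates at $z_d$ on $B_X^c$ (which is possible for a standard family), then no simple $\varphi$ with values in $I$ is admissible at all, and the essential supremum in \eqref{risk2beta-eq} is taken over an empty family. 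Condition c) removes this problem only when $z_d=-\infty$, exactly as you observe. The paper's own proof glosses over the same point (it asserts the needed inequality ``by definition of $\bt$'' without constructing $\varphi$), so this is a defect of the literal reading of the formula rather than of your argument, whose intended object is precisely $\phi_X$.
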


\begin{proof}
The proof is split in a number of steps.
\begin{enumerate}
 \item   $\bt: \LbbT \rightarrow \Lbbt$.

 Fix $X\in \LbbT$.  When $z_d$ is finite, by definition $\{\bt(X) = z_d\} \equiv  B_X$.  So, if  $z_d$ is finite we are done.    In case  $z_d =-\infty$,   by property c) of standard families, there exists  $z^* \in \mathbb{R}$ such that $ \sigma^{z^*}_t(0)< -\|X^-\|_{\infty}$ and  by monotonicity and  translation invariance  $ \sigma^{z^*}_t(X)\leq  \sigma^{z^*}_t(  - \|X^-\|_{\infty})<0$, from which $\bt(X) \geq z^*$.
 \item Locality.  \begin{eqnarray*}
          \bt(XI_B)I_B &=&     z_d I_{B_X\cap B} +  \\
                  && I_{B_X^c \cap B}\,  \esssup\{    \varphi =\sum_1^n z_i I_{B_i}, z_i \in I,  \stzi(X I_B)<0 \text{ on } B_i  \cap B_X^c \}  \\
             &=&   I_B ( z_d I_{B_X}\, +  I_{B_X^c}  \esssup\{   \varphi =\sum_1^n z_i I_{B_i}, z_i \in I, \stzi(X  )<0 \text{ on } (B_i\cap B)  \cap B_X^c \}
        \end{eqnarray*}
   where the second equality follows from locality of $\stzi$.       When $\{ B_1, \ldots B_n\}$  varies among all finite $\Ft$-measurable partitions of $\Omega$,  $ \{ B_1\cap B, B_1\cap B^c, \ldots,  B_n\cap B, B_n \cap B^c\} $ still varies among all such partitions and therefore we get  that the last term displayed above equals $ I_B \bt(X) $.
  \item If $z_u>z>z_d$, then   $ \bt(X) > z$ on $B$ if and only if  $\stz(X)<0$ on $B$.    The arrow $(\Rightarrow)$ is a consequence of the definition \eqref{risk2beta-eq}.  Conversely,  suppose  $\stz(X)<0$ on $B$.  By the $z$-(right) continuity  and monotonicity of the process $\stz(X)_z$ we have
      $$ \{\stz(X)< 0\} = \cup_{n\in \mathbb{N}^*, z+\frac{1}{n}<z_u} \{\sigma^{z+ \frac{1}{n}}_{t}(X)< 0\} $$
      so that  $B = \cup_{n\in \mathbb{N}^*, z+\frac{1}{n}<z_u} B_n$, where  $B_n:=B \cap  \{\sigma^{z+ \frac{1}{n}}_{t}(X)< 0\}$.  But on $B_n$, by definition of $\bt$ and by its locality, $\bt(X)\geq z+ \frac{1}{n}>z$.

\item Quasi concavity. By locality, it is enough to show that,  for any $z_u>z>z_d$, if  $\bt(X)>z $ and $\bt(Y)> z$, then $ \bt(cX+(1-c)Y) >z$. In virtue of item 3 above, this is equivalent to $\stz(X)<0, \stz(Y)<0$  imply $ \stz(cX+(1-c)Y) <0$, which holds by convexity of $\stz$.
  \item Thresholds. Let $z \in I$.  Since $\stz(0) \in L^{\infty}_t$,   $\stz(0)\leq b_z$ for some constant. Therefore,  for all $n$ $\stz(n) \leq b_z-n$, and thus from decreasing monotonicity, $\stz(+\infty) =-\infty$ for all $z$.  This implies  $\bt(+\infty) =z_u $.  To show $z_d$ is the lower threshold, set
  $$ \lim_n \bt(-n) : =    \eta $$
  The limit exists as $\bt$  is clearly monotone non decreasing. The set $\{\eta>z_d \}$ has zero probability. In fact,
  $$ \{\eta>z_d \} =\cup_{z_u> q>z_d , q \in \mathbb{Q}} \{\eta>q \}  $$
On  every  $\{\eta>q \}$   $\bt(-n) >q$ for all $n$ and this    implies $\sigma^q_t(0)<-n$ for all $n$, which leads to a contradiction unless  $P( \eta>q)=0$.

 \item Non decreasing monotonicity is clear, so we just show strict monotonicity over constant positive shifts.   Let $c>0$. Assume that $\bt(X) <z_u, \beta_t(X+c)>z_d$.   Set $B = \{ \bt(X+c) =\bt(X) \}$ and let us prove $P(B)=0$.    On $B$, for any $ z_u>z>z_d $ such that $\bt(X+c)>z$, also $\bt(X)>z$.  Reformulating thanks to   item 3 above, for any such   $z$,   $\sigma^z_t(X+c)<0$ on $B$ implies $\sigma^z_t(X)<0$ on $B$. Thus  on $B$  the image of the path of $\stz(X)_z$ does not contain  $[0,c)$. Due to the continuity of the map $z \rightarrow \sigma^z_t(X)$ it follows that  on $B$, $\sigma^z_t(X)<0$ for all $z$ or that
 $\sigma^z_t(X) \geq c$ for all $z \in (z_d,z_u)$. The first condition means $\bt(X)=z_u$ and the second $\bt(X+c)=z_d$, a contradiction unless $P(B)=0$.

 \item Continuity from below.
Let $X_n\uparrow X$. Then,  from   monotonicity of $\bt$, $\bt(X)\geq \lim_n \bt(X_n)$.   We obtain an equality as soon as  if for some $z_u>z>z_d$   $\bt(X)> z$ on  $B$, then  $\lim_n \bt(X_n)> z$ on $B$.  By locality, since $\bt(XI_B) I_B = \bt(X)I_B$, by eventually replacing $X$  with $XI_B$  and $X_n$ with $X_n I_B$, we can assume $B=\Omega$. By  item 3 above,   $\stz(X)<0$.  From continuity from below of the risk measure $\stz$, $ \stz(X_n) \downarrow \stz(X)$.  If $ C_n: =  \{ \stz(X_n )<0  \}$, then $C_n \uparrow \Omega$ and by definition of $\bt$
 $$ \lim_n \bt(X_n) \geq \bt(X_n) > z \text{ on } C_n  $$
 whence $ \lim_n \bt(X_n) >z$.
\item Property 6 of CPM.  If  $  z_d < z < z_u$ and  if $z-\epsilon >z_d$, then   $\{ \bt \geq z \} \subseteq  \{ \bt > z-\epsilon \} = \{ \sigma^{z- \epsilon}_t <0 \}$ by monotonicity and item 3.  So,
           $$ \xi \in \Lbbt, \xi \in A^z_t \Rightarrow   \sigma^{z- \epsilon}_t (\xi) < 0 \Rightarrow   \sigma^{z- \epsilon}_t (0) < \xi $$
           and since $  \sigma^{z- \epsilon}_t (0) \in L^{\infty}_t$ we conclude $A^z_t \cap \Lbbt $ is uniformly bounded from below.

\item Properties 7 and 8 for CAIs are easily verified  in case  $  (\stz)_{z\in I}$ is a family of conditional coherent risk measures and  $z_d=0$.
 \end{enumerate}
\end{proof}

 \subsection{A one to one relation between  standard families of risks and CPMs}
\begin{proposition}\label{sumup}
 We tie  together the results shown so far:
\begin{enumerate}
  \item a CPM $\bt$ induces a standard family of convex risk measures $(\rtz)_{z_d<z<z_u}$; if $\bt$ is a CAI, the risks are coherent;
  \item a standard family of convex risk measures generates a CPM; if in addition the risks are coherent, they induce a CAI;
  \item   $(\rtz)_{z_d<z<z_u}$ is the  unique  standard family of convex risk measures generating  the CPM $\bt$, in the sense that  any other standard family $(\stz)_{z_d<z<z_u}$ generating $\bt$ satisfies
       $$\rtz(X) = \stz(X) $$

\end{enumerate}

\end{proposition}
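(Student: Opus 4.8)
The plan is to establish the three claims of Proposition \ref{sumup} by assembling results already proved, with the real work concentrated in the uniqueness assertion (item 3). Items 1 and 2 are essentially bookkeeping: item 1 restates Proposition \ref{induced-risks}, which shows that the induced family $(\rtz)_z$ is convex, monotone, local, translation invariant, continuous from below, non-decreasing and continuous in $z$, and coherent when $\bt$ is scale invariant (hence a CAI); what remains is only to verify that this family satisfies condition c) of a standard family when $z_d=-\infty$, which is precisely \eqref{limit} in Proposition \ref{induced-risks}. Item 2 is exactly Proposition \ref{risk2beta}. So I would dispatch these two in a sentence each, citing the relevant propositions.

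The substance is item 3. I would first check that the family $(\rtz)_z$ does generate $\bt$: this is Proposition \ref{reconstruction}, which reconstructs $\bt$ from $(\rtz)_z$ via the formula \eqref{risk2beta-eq}. The key point is that the reconstruction formula \eqref{risk2beta-eq} defining $\bt$ from an abstract standard family depends on that family \emph{only} through the sets $\{\stz(X)<0\}$ and $B_X=\cap_z\{\stz(X)\geq 0\}$ for each fixed $X$. Consequently, if another standard family $(\stz)_z$ also generates the \emph{same} $\bt$, then by item 3 of Proposition \ref{risk2beta}'s proof (the equivalence $\bt(X)>z$ on $B$ iff $\stz(X)<0$ on $B$) applied to \emph{both} families, I get for every $X$ and every $z\in(z_d,z_u)$ the set identity
$$\{\stz(X)<0\}=\{\bt(X)>z\}=\{\rtz(X)<0\},$$
the right-hand equality coming from Lemma \ref{downward} item 6 applied to the canonical family. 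So the two families share the same sign sets for every $X$ and every $z$.

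The final step is to upgrade this agreement of sign sets to pointwise equality of the risks. The plan is to exploit translation invariance on $\Lbbt$, which both families possess. Fix $X$; for any $\eta\in\Lbbt$ and any $z$, translation invariance gives $\{\stz(X)<\eta\}=\{\stz(X-\eta)<0\}=\{\rtz(X-\eta)<0\}=\{\rtz(X)<\eta\}$. Taking $\eta$ to range over a countable dense family of step functions — e.g. rationals on the atoms of $\Ft$-measurable partitions, using locality to localize — forces $\stz(X)=\rtz(X)$ a.s., since two $\Lbat$-valued (indeed, I only need the values compared through the super-level structure) random variables with identical $\{\,\cdot<\eta\}$ sublevel sets for a dense set of thresholds $\eta$ must coincide. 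The cleanest route is: for rational $q$ and $B\in\Ft$, $\rtz(X)<q$ on $B$ iff $\stz(X)<q$ on $B$ (via translation by the constant $q$), and since $\rtz(X),\stz(X)\in\Lbat$ are determined by their sublevel sets over a countable dense set of constant thresholds together with locality, they are equal.

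\textbf{The main obstacle} I anticipate is precisely this last upgrade from equal sign sets to equal risks: I must be careful that the risks are merely $\Lbat$-valued (possibly $-\infty$-valued, via the convention on $+\infty$), so I cannot naively invoke a.s. equality of real-valued random variables. The remedy is that translation invariance on all of $\Lbbt$ converts the single fixed-$z$ sign-set identity into a full family of sublevel-set identities $\{\rtz(X)<q\}=\{\stz(X)<q\}$ for all rational $q$ (and, by locality, on all $\Ft$-sets), which pins down the essential infimum structure and hence the value, including the cases where it equals $\pm\infty$. One must also verify the edge behavior at $z_d,z_u$ and the $-\infty$ value of $\rtz$ on $\{X=+\infty\}$ is shared, but this follows from the same translation argument since $M^z_t(+\infty)=\Lbbt$ forces both risks to $-\infty$ there.
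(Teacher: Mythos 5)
Your proposal is correct and follows essentially the same route as the paper: items 1 and 2 are dispatched by citing Propositions \ref{induced-risks} and \ref{risk2beta}, and item 3 combines Proposition \ref{reconstruction} with the level-set equivalence $\rtz(X)\geq 0 \Leftrightarrow \bt(X)\leq z \Leftrightarrow \stz(X)\geq 0$ (obtained from \eqref{eqlar} and the proof of Proposition \ref{risk2beta}, item 3), concluding by translation invariance. The only quibble is your intermediate display for a general $\eta\in\Lbbt$ (there $X-\eta$ need not be bounded from below and the sign is flipped, since $\stz(X-\eta)=\stz(X)+\eta$), but your ``cleanest route'' via rational constant shifts is exactly the paper's implicit final step, spelled out.
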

\begin{proof}
\begin{enumerate}
  \item The family $(\rtz)_z$ is standard from Proposition \ref{induced-risks}.
  \item This is Proposition \ref{risk2beta}.
  \item  If $(\rtz)_z$ is the family induced by $\bt$, then the family in turn generates  a CPM which by Proposition \ref{reconstruction} is exactly $\bt$.  If, for any other standard family $(\stz)_{z_d<z<z_u}$,  we show that the equality $\{\stz\geq 0 \}= \{\rtz\geq 0 \}$ holds,  then the uniqueness of  $(\rtz)_z$  will follow from translation invariance.
  Fix then $X\in \LbbT$.   From \eqref{eqlar} and from the proof Proposition \ref{risk2beta}, item 3,
  $$  \rtz(X) \geq 0 \text{ iff } \bt(X) \leq z \text{ iff } \stz(X)\geq 0 $$
\end{enumerate}

\end{proof}

\begin{remark}
  If $\Ft$ is trivial and if we are in the scale invariant case,  the above Proposition  becomes a static result which can be rephrased as follows:

   \begin{flushleft}
   \emph{ A map $\beta: \LbbT \rightarrow [0, +\infty]$ is an acceptability index if and only if there exists a unique  standard family of coherent risk measures $(\sigma^z)_{z \in (0,z_u)}$      such that
     $$   \beta(X)= \sup \{ z \mid  \sigma^z(X) <  0   \} $$
    with the convention $\sup \emptyset =0$.
 The  standard family is    $(\rho^z)_{z \in (0,z_u)}$,  with  $\rho^z(X) =\essinf \{ c \mid  \beta(X+c)\geq z\}$.}

   \end{flushleft}

This is the ($\LbbT$-formulation of the) result in \cite{cm}, Theorem 1 and Proposition 2 eqq.  (4) and (5).  Here the authors  obtain a very close  expression for $\beta$ as $\beta(X)= \sup \{ z \mid  \sigma^z(X) \leq  0   \} $, with  non strict  inequality, and they get a  maximal standard family, possibly non unique.  Such difference  is due to   the different  order continuity assumption of the map, from below and not from above, and  to our assumption of strict monotonicity over constant positive shifts.
  Our result also  generalizes to the conditional case   some of the findings of Drapeau and Kupper for static quasi-convex risk measures \cite[Theorem 1.7]{dk}.

\end{remark}

\section{Dynamic Performance Measures}

\subsection{Definition   and  time consistency}

\begin{definition}\label{DPM}
A Dynamic Performance Measure (DPM) is a family    $\beta = (\beta_t)_{t\in \T }$   of CPMs with the properties:
\begin{enumerate}
  \item $\bt: \LbbT \rightarrow \Lbbt$
  \item  the (essential) supremum and infimum $ z_u, z_d$  do not depend on $t \in \T$. \footnote{From a mathematical viewpoint, we could relax the requirement to $ z_u =z_u(t)$ is non increasing, $z_d=z_d(t)$ is non decreasing with no substantial changes.}
\end{enumerate}
 In case the  CPMs are in particular CAIs,  we call the DPM  a Dynamic Acceptability Index (DAI).
\end{definition}
The consistency across time of a DPM certainly is a key issue. The notion we choose ensures that if a future performance measure  $\bt$ assesses a  value to  a payoff $X$ uniformly \emph{ strictly}  higher than a certain level, the today's performance measure $\bs$   preserves such level.
\begin{definition}
A DPM (or DAI)  $\beta$ is   time consistent  if for all $z_d <z<z_u$, $X\in \LbbT$,  $s, t \in \T, s<t$
\begin{equation}\label{tc}
  \bt(X) > z   \Rightarrow  \bs(X) > z ,
\end{equation}

\end{definition}

\begin{remark}\label{rqwt}
  Locality of  each  component of the DPM $\beta$     implies   that the time consistency definition can be given in the equivalent form $$ X \in \LbbT, B \in \Fs,  \bt(X) >  z \text{ on }  B  \Rightarrow \beta_s(X) > z \text{ on } B  $$

The simple proof  relies on Corollary \ref{simple}.  A  very close notion of time consistency has been given in  Rosazza and Sgarra \cite{rs} with non strict inequalities in the above displayed relation,
for the case of DAIs continuous from above originating from $g$-expectations defined on bounded variables.
\end{remark}

\begin{proposition} \label{consistency}
  For a DPM $\beta$
 the following are equivalent:
 \begin{enumerate}
   \item  $\beta$ is time consistent;
   \item   for all $z_d<z<z_u$, $ s<t$, $ \{\rtz < 0  \} \subseteq \{ \rsz <0 \} $;
\item for all $z_d<z<z_u$ $ s<t$, $ \{\rtz \leq 0  \} \subseteq \{ \rsz \leq 0 \} $.
 \end{enumerate}
\end{proposition}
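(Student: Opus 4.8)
The plan is to prove the equivalence of the three conditions by establishing the cycle $1 \Rightarrow 2 \Rightarrow 3 \Rightarrow 1$, with the bridge to time consistency provided by the dictionary between the CPM and its induced risks. The central tool throughout is Lemma \ref{downward}, item 6, which translates statements about the performance measure into statements about the sign of $\rtz$: recall that $\bt(X)>z$ on $B$ iff $\rtz(X)<0$ on $B$, and $\bt(X)\leq z$ on $B$ iff $\rtz(X)\geq 0$ on $B$.

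\medskip
\noindent\textbf{Step 1 ($1 \Rightarrow 2$).} Fix $z_d<z<z_u$, $s<t$, and $X\in\LbbT$. The inclusion $\{\rtz<0\}\subseteq\{\rsz<0\}$ is what I want. By Lemma \ref{downward}, item 6, $\{\rtz(X)<0\}=\{\bt(X)>z\}$ and $\{\rsz(X)<0\}=\{\bs(X)>z\}$. So the desired set inclusion is literally the statement $\bt(X)>z$ on $B$ implies $\bs(X)>z$ on $B$ with $B=\{\rtz(X)<0\}\in\Ft$. But this is exactly the localized form of time consistency recorded in Remark \ref{rqwt} (which is available since $B\in\Ft\subseteq\Fs$ may fail — here is the one subtlety: $B\in\Ft$ but Remark \ref{rqwt} wants $B\in\Fs$). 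I would handle this by invoking the \emph{pointwise} form \eqref{tc} directly rather than the set version: apply time consistency to $X$ at level $z$ to get the implication $\bt(X)>z\Rightarrow\bs(X)>z$ as an almost-sure implication of inequalities, which immediately yields the set inclusion.

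\medskip
\noindent\textbf{Step 2 ($2 \Rightarrow 3$).} This is the step I expect to be the main obstacle, because passing from strict to non-strict sublevel sets requires a limiting/monotonicity argument in the parameter $z$. The key is Proposition \ref{induced-risks}, item ii): for fixed $X$, the map $z\mapsto\rtz(X)$ is non decreasing and continuous on $(z_d,z_u)$. I would write $\{\rtz(X)\leq 0\}$ in terms of strict sublevel sets at nearby parameters. Using continuity in $z$, one expects $\{\rho^z_t(X)\leq 0\}=\bigcap_{z'<z}\{\rho^{z'}_t(X)<0\}$ or an analogous relation built from the identity $\{\rho^z_t\leq 0\}=\bigcup_{z'>z}\{\rho^{z'}_t<0\}^c$-type manipulations; the cleanest route is to note $\{\rtz(X)\leq 0\}=\bigcap_{z'>z,\,z'<z_u}\{\rho^{z'}_t(X)\ \text{not}\ \geq \text{positive}\}$, but more simply $\rho^{z'}_t(X)<0$ for $z'<z$ (by monotonicity in $z$, $\rho^{z'}_t(X)\leq\rtz(X)\leq 0$, and strictness on a limiting set). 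Concretely: on $\{\rtz(X)\leq 0\}$ one has $\rho^{z'}_t(X)<0$ for every $z'<z$ by $z$-monotonicity combined with strict monotonicity of the risk in $z$ (Proposition \ref{induced-risks} ii forbids flat pieces), so hypothesis 2 at level $z'$ gives $\rho^{z'}_s(X)<0$, hence $\rsz(X)=\lim_{z'\uparrow z}\rho^{z'}_s(X)\leq 0$ by continuity in $z$. This yields $\{\rtz\leq 0\}\subseteq\{\rsz\leq 0\}$.

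\medskip
\noindent\textbf{Step 3 ($3 \Rightarrow 1$).} Finally I translate the non-strict inclusion back into time consistency. Suppose $\bt(X)>z$ for some $z_d<z<z_u$. I want $\bs(X)>z$. Pick any $z'$ with $z<z'<z_u$; by monotonicity of $\bt$ in the payoff and the structure of upper level sets, $\bt(X)>z$ gives $\bt(X)\geq z'$ on the relevant set for $z'$ close enough to $z$, equivalently $\rho^{z'}_t(X)\leq 0$ by \eqref{eqlar}. Hypothesis 3 at level $z'$ then gives $\rho^{z'}_s(X)\leq 0$, i.e. $\bs(X)\geq z'>z$ by \eqref{eqlar} again, so $\bs(X)>z$. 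The care here is choosing $z'$ so that the strict inequality $\bt(X)>z$ upgrades to $\bt(X)\geq z'$ on a set exhausting $\{\bt(X)>z\}$; I would use the continuity of $z\mapsto\rtz(X)$ once more, letting $z'\downarrow z$ and taking a countable union over rationals $z'\in(z,z_u)$ to recover all of $\{\bt(X)>z\}$. This closes the cycle and establishes the three-way equivalence.
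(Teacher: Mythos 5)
Your proposal is built on a misreading of the statement, and this misreading then forces two technical steps that are genuinely false. Conditions 2 and 3 are inclusions of \emph{acceptance sets}, i.e. $\{X\in\LbbT : \rtz(X)<0 \ \text{a.s.}\}\subseteq\{X\in\LbbT : \rsz(X)<0\ \text{a.s.}\}$, not inclusions of events $\{\omega:\rtz(X)(\omega)<0\}\subseteq\{\omega:\rsz(X)(\omega)<0\}$ for each fixed $X$. This is forced by the paper's standing convention that inequalities hold $P$-a.s., by the name ``weak acceptance consistency'', by the later lemma on the weak* closure of $\{\rtz\leq 0\}\cap L^{\infty}_T$, and by the paper's own proof. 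Moreover, under your event-wise reading the equivalence $1\Leftrightarrow 2$ is simply false: the dynamic Gain-Loss Ratio is time consistent, yet if $\Fs$ is trivial, $\Ft$ is full, and $X$ equals $1$ on one atom and $-10$ on the other (equal weights), then $\{GLR_t(X)>z\}$ is a nonempty $\Ft$-event while $\{GLR_s(X)>z\}$ is empty. Relatedly, \eqref{tc} is an implication between the two global statements ``$\bt(X)>z$ a.s.'' and ``$\bs(X)>z$ a.s.''; it is \emph{not} the pointwise implication you invoke to dispose of the ``$B\in\Ft$ versus $B\in\Fs$'' subtlety in Step 1, so that fix does not exist. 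Under the correct reading, $1\Leftrightarrow 2$ is immediate from Lemma \ref{downward}, item 6, applied with $B=\Omega$, which is exactly the paper's argument.

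The other two steps rest on false claims. In Step 2 you assert that Proposition \ref{induced-risks}, part ii), ``forbids flat pieces'', so that $\rtz(X)\leq 0$ forces $\rho^{z'}_t(X)<0$ for $z'<z$. It does not: that result gives only monotonicity and continuity in $z$ (its proof excludes jumps, not flats), and flat pieces do occur --- for the GLR one has $\rho^z_t(0)=0$ for every $z\in(0,z_u)$, so your premise holds at $X=0$ while your conclusion fails at every $z'$. The paper's proof of $2\Rightarrow 3$ avoids the parameter $z$ entirely: from $\rtz(X)\leq 0$, translation invariance gives $\rtz(X+\tfrac1n)\leq-\tfrac1n<0$, condition 2 gives $\rsz(X+\tfrac1n)<0$, i.e. $\rsz(X)<\tfrac1n$, and one lets $n\to\infty$. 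In Step 3 you convert $\rho^{z'}_s(X)\leq 0$ into $\bs(X)\geq z'$ ``by \eqref{eqlar} again''; \eqref{eqlar} says no such thing --- it identifies $\{\rho^{z'}_s(X)\geq 0\}$ with $\{\bs(X)\leq z'\}$ --- and this conversion is precisely what Remark \ref{strict-inclusion} warns against: $\{\rho^{z'}_s\leq 0\}$ may strictly contain $A^{z'}_s$ (again GLR at $X=0$). Bridging that gap is the whole difficulty of the direction $3\Rightarrow 2$ (equivalently $3\Rightarrow 1$); the paper attacks it at a fixed level $z$ by pasting $(X-\tfrac1n)I_{B_n}+XI_{B_n^c}$ on $B_n=\{\rtz(X)<-\tfrac1n\}$ and exploiting locality and translation invariance of the risk measures, a mechanism absent from your argument.
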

\begin{proof}
\bigskip
\begin{itemize}
\item[] The equivalence $1) \Leftrightarrow 2) $  is an easy consequence of Lemma \ref{downward}, item 6.

\item[$2) \Rightarrow 3)$]
Let $s < t$ and suppose  $\rho^z_t(X) \leq 0$. By translation invariance,   $\rho^z_t(X+ \frac{1}{n})<0$. From property 2),  $\rho^z_s(X+ \frac{1}{n})<0$. By translation invariance, passing to the limit we get  $\rho^z_s(X) \leq 0$.
\item[ $3) \Rightarrow 2)$] Conversely, suppose 3) holds and fix $X$ such that  $\rho^z_t(X)<0$. Let $B_n=\{ \rho^z_t(X)<-\frac{1}{n}\}$. By translation invariance  and locality $\rho^z_t((X-\frac{1}{n})I_{B_n}+XI_{B_n^c})\leq 0$, from which   $\rho^z_s((X-\frac{1}{n})I_{B_n}+XI_{B_n^c})\leq 0$. By localization and translation invariance $\rho^z_s(X) < -\frac{1}{n}$ on $B_n$ and   passing to the limit we conclude $\rho^z_s(X)<0$.
\end {itemize}
\end{proof}

\begin{remark}
The equivalence of the properties 2) and 3) in the above proposition is very general. It applies as soon as the translation invariance property and the local property are satisfied. In particular it is satisfied for dynamic risk measures on $L^p$ spaces, $1 \leq p \leq \infty$.
\end{remark}
\begin{remark}
 In the specific case of normalized risk measures Property 3 of Proposition \ref{consistency} is known under the name of 'weak acceptance consistency', see Acciaio and Penner \cite{ap} for  a detailed overview of time consistency definitions for risks and detailed references.
\end{remark}
\begin{corollary}
Let $z_d<z<z_u$. For $Q \ll P$, let
$$\alpha^{z}_{t}(Q):=\esssup_{Z \in L^{\infty}({\cal F}_T)}(E^Q_t[-Z] -  \rho^z_t(Z))$$
Then, if the  DPM $\beta$   is time consistent,  for any fixed $z$ we have
\begin{equation}
 \text{ for all }   Q \ll P, s<t,\;\; E^Q_s[\alpha^{z}_{t}(Q) ] \leq \alpha^{z}_{_s}(Q)
\label{eqpenalty}
\end{equation}
\end{corollary}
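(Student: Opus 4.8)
The plan is to deduce \eqref{eqpenalty} from the acceptance consistency $\{\rtz\le 0\}\subseteq\{\rsz\le 0\}$ furnished by the equivalence $1)\Leftrightarrow 3)$ of Proposition \ref{consistency}, combined with the description of the minimal penalty as the support function of the acceptance set. For $s<t$ write $\mathcal{A}^z_t:=\{W\in\LbbT\mid \rtz(W)\le 0\}$ and define $\mathcal{A}^z_s$ analogously. The first step is to record that, for every $Q\ll P$,
\[
\alpha^z_t(Q)=\esssup_{W\in\mathcal{A}^z_t\cap\LT}E^Q_t[-W].
\]
This follows from translation invariance of $\rtz$ on $\Lbbt$ (Proposition \ref{induced-risks}, item 5): for bounded $Z$ the variable $W:=Z+\rtz(Z)$ lies in $\mathcal{A}^z_t\cap\LT$ and, since $\rtz(Z)\in\Lt$, satisfies $E^Q_t[-Z]-\rtz(Z)=E^Q_t[-W]$; conversely any $W\in\mathcal{A}^z_t\cap\LT$ is an admissible test variable in the definition of $\alpha^z_t(Q)$ with $E^Q_t[-W]-\rtz(W)\ge E^Q_t[-W]$.

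The second step is the lattice-directedness argument that makes this essential supremum reachable by an increasing sequence. Given $W_1,W_2\in\mathcal{A}^z_t\cap\LT$, put $A:=\{E^Q_t[-W_1]\ge E^Q_t[-W_2]\}\in\Ft$ and $W:=W_1 I_A+W_2 I_{A^c}$. Locality of $\rtz$ (Proposition \ref{induced-risks}, item 4) gives $\rtz(W)=\rtz(W_1)I_A+\rtz(W_2)I_{A^c}\le 0$, so $W\in\mathcal{A}^z_t\cap\LT$ while $E^Q_t[-W]=E^Q_t[-W_1]\vee E^Q_t[-W_2]$. Hence $\{E^Q_t[-W]\mid W\in\mathcal{A}^z_t\cap\LT\}$ is upward directed and, by the standard selection result \cite[Chapter VI.1]{nev}, there is a sequence $(W_n)_n$ in $\mathcal{A}^z_t\cap\LT$ with $E^Q_t[-W_n]\uparrow\alpha^z_t(Q)$.

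In the third step I combine the tower property, conditional monotone convergence and acceptance consistency. Since $s<t$ and the $W_n$ are bounded,
\[
E^Q_s[\alpha^z_t(Q)]=E^Q_s\Big[\lim_n E^Q_t[-W_n]\Big]=\lim_n E^Q_s\big[E^Q_t[-W_n]\big]=\lim_n E^Q_s[-W_n].
\]
By acceptance consistency $\mathcal{A}^z_t\subseteq\mathcal{A}^z_s$, so every $W_n$ lies in $\mathcal{A}^z_s\cap\LT$, whence $\rsz(W_n)\le 0$ and $E^Q_s[-W_n]\le E^Q_s[-W_n]-\rsz(W_n)\le\alpha^z_s(Q)$ by definition of $\alpha^z_s(Q)$. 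Letting $n\to\infty$ yields $E^Q_s[\alpha^z_t(Q)]\le\alpha^z_s(Q)$, which is \eqref{eqpenalty}; note this argument covers every $Q\ll P$, the case $\alpha^z_s(Q)=+\infty$ being trivial.

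I expect the only delicate point to be the interchange of the essential supremum with $E^Q_s$, that is, the passage to the increasing sequence $(W_n)_n$ attaining $\alpha^z_t(Q)$; this is exactly what the locality (pasting) property of $\rtz$ delivers through the directedness argument, and it is also why one keeps the test variables bounded, so that the conditional monotone convergence theorem applies without integrability issues. The rewriting of the penalty as a support function and the final estimate are then routine once translation invariance and the inclusion $\mathcal{A}^z_t\subseteq\mathcal{A}^z_s$ are in hand.
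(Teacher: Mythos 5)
Your proof is correct, but it follows a genuinely different route from the paper's. The paper disposes of this corollary in three lines: it uses Proposition \ref{consistency} to pass from time consistency of the DPM to weak acceptance consistency of the induced risks, and then simply cites Acciaio and Penner \cite{ap}, who prove that for dynamic risk measures on $L^{\infty}_T$ weak acceptance consistency is equivalent to the penalty inequality \eqref{eqpenalty}, remarking only that their normalization assumption can be dropped. You perform the same reduction to weak acceptance consistency (item 3 of Proposition \ref{consistency}), but then prove the needed implication from scratch: you identify $\alpha^z_t(Q)$ with the support function of the acceptance set $\{\rtz\le 0\}\cap L^{\infty}_T$ via translation invariance (Proposition \ref{induced-risks}, item 5), use locality to show that the family $\{E^Q_t[-W]\}$ is upward directed so that the essential supremum is reached along an increasing sequence, and conclude by conditional monotone convergence, the tower property, and the inclusion $\{\rtz\le 0\}\subseteq\{\rsz\le 0\}$. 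In substance this is the standard argument underlying the cited equivalence, so the mathematical content overlaps; what your version buys is self-containedness, and in particular it handles the non-normalized risks directly, a point the paper merely asserts (``the same proof holds''). What the citation buys is brevity, plus the full equivalence of the two properties, of which the corollary needs only one direction. The one point worth making explicit in your write-up is that for $Q\ll P$ not equivalent to $P$, the quantities $E^Q_t[-Z]$, and hence the essential suprema and \eqref{eqpenalty} itself, are only defined up to $Q$-null sets; this technicality affects the paper's formulation equally and is harmless in your argument, which can be read $Q$-a.s.\ throughout.
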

\begin{proof}
Acciaio and Penner  \cite{ap} prove  that for    dynamic risk measures on $L^{\infty}_T $,  weak time consistency  is equivalent to (\ref{eqpenalty}). They assume the risks are normalized, but the same proof holds for the non normalized case. Therefore from Proposition \ref{consistency},  equation (\ref{eqpenalty}) is a necessary condition for time consistency of the DPM.
\end{proof}

On $L_T^{\infty}$, the acceptance sets satisfy the following property:
\begin{lemma}
Fix $z_d<z<z_u$ and $0\leq  t <T$.   Then  $\{ \rtz\leq 0\}\cap L^{\infty}_T$  is weak* closed   and
$$\overline{ \{ \rtz< 0\}\cap L^{\infty}_T}^* =  \{ \rtz\leq 0\}\cap L^{\infty}_T $$
\end{lemma}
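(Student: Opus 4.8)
The statement bundles two claims: that $\{\rtz\leq 0\}\cap\LT$ is weak* closed in $\LT=L^\infty(\Omega,\FT,P)$, and that the weak* closure of $\{\rtz<0\}\cap\LT$ coincides with it. The plan is to extract the weak* closedness from the dual representation of Corollary \ref{rep}, and then to obtain the closure identity almost for free by combining that closedness with the translation invariance of $\rtz$.

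For the weak* closedness I would proceed as follows. By Corollary \ref{rep}, on $\LT$ the map has the representation $\rtz(X)=\mathrm{ess}\max_{Q\in\mathcal{Q}}(E^Q_t[-X]-\alpha_t(Q))$. Since an essential supremum of a family is $\leq 0$ almost surely exactly when every member is, for $X\in\LT$
\[
\rtz(X)\leq 0 \text{ a.s.} \iff E^Q_t[-X]\leq \alpha_t(Q)\ \text{a.s. for every } Q\in\mathcal{Q}.
\]
Testing each $\Ft$-measurable inequality against indicators $I_B$, $B\in\Ft$, and using $Q|_{\Ft}=P|_{\Ft}$ to turn the conditional expectation into an expectation, the condition becomes $-E^P\!\left[I_B\tfrac{dQ}{dP}X\right]\leq E^P[I_B\alpha_t(Q)]$. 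Because $I_B\tfrac{dQ}{dP}\in L^1(P)$, the functional $X\mapsto E^P\!\left[I_B\tfrac{dQ}{dP}X\right]$ is $\sigma(L^\infty,L^1)$-continuous, so each such condition carves out a weak* closed half-space of $\LT$ (or all of $\LT$ when the right-hand side is $+\infty$). Hence $\{\rtz\leq 0\}\cap\LT$ is an intersection, over all pairs $(Q,B)$, of weak* closed half-spaces, and is therefore weak* closed. Equivalently, this is the standard fact that continuity from above of the restriction to $\LT$, asserted in Corollary \ref{rep}, forces the acceptance set to be weak* closed.

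For the closure identity I would prove the two inclusions separately. The inclusion $\overline{\{\rtz<0\}\cap\LT}^*\subseteq\{\rtz\leq 0\}\cap\LT$ is immediate: the right-hand side is weak* closed by the first part and contains $\{\rtz<0\}\cap\LT$, hence it contains its weak* closure. For the reverse inclusion, take any $X\in\LT$ with $\rtz(X)\leq 0$. By translation invariance (Proposition \ref{induced-risks}, item 5), $\rtz(X+\tfrac1n)=\rtz(X)-\tfrac1n\leq-\tfrac1n<0$, so $X+\tfrac1n\in\{\rtz<0\}\cap\LT$, while $\|(X+\tfrac1n)-X\|_\infty=\tfrac1n\to 0$. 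Thus $X$ lies already in the norm closure, a fortiori in the weak* closure, of $\{\rtz<0\}\cap\LT$. The two inclusions give the claimed equality.

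The substantive step is the weak* closedness; once the dual representation is invoked it reduces to recognizing the acceptance set as an intersection of weak* closed half-spaces, so no recourse to Krein--\v{S}mulian is needed. The only point requiring care is the conditional nature of the representation, namely converting the almost-sure $\Ft$-measurable inequalities into genuinely weak*-continuous scalar constraints through the test sets $B\in\Ft$ and the identity $Q|_{\Ft}=P|_{\Ft}$ (with the caveat that $\alpha_t(Q)=+\infty$ on part of $\Omega$ merely trivializes the corresponding constraint). The closure identity is then short, its reverse inclusion being nothing more than the translation-invariant approximation $X+\tfrac1n\to X$.
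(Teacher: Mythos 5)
Your proof is correct, and it splits exactly along the paper's lines, so the comparison concerns only the first half. For the closure identity your argument is the paper's verbatim: one inclusion is free once $\{\rtz\leq 0\}\cap \LT$ is known to be weak* closed (it contains $\{\rtz<0\}\cap\LT$, hence its closure), and the reverse inclusion is the translation trick $Y_n=Y+\frac1n$, with $\rtz(Y_n)\leq-\frac1n<0$ and norm convergence implying weak* convergence. For the weak* closedness, however, the paper does not unpack the dual representation: it observes that $\rtz$ restricted to $\LT$ is an $\Lt$-valued conditional convex risk measure and invokes the standard result of \cite{jbn0} that continuity from below implies continuity from above, which is in turn equivalent to weak* closedness of the acceptance set. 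You instead take the representation \eqref{eqrep1} of Corollary \ref{rep} at face value and exhibit the acceptance set concretely as an intersection, over pairs $(Q,B)$ with $Q\in\mathcal{Q}$ and $B\in\Ft$, of the weak* closed half-spaces $\{X\in\LT : -E^P[I_B\tfrac{dQ}{dP}X]\leq E^P[I_B\alpha_t(Q)]\}$, using $Q|_{\Ft}=P|_{\Ft}$ to convert the conditional constraints into $\sigma(L^\infty,L^1)$-continuous scalar ones. This is a genuinely more self-contained route: it makes explicit that the implication needed here (representation $\Rightarrow$ closed acceptance set) is the elementary direction, requiring no Krein--\v{S}mulian and no citation of the equivalence between continuity from above and weak* closedness. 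The only step deserving a word more than you give it is the converse of your indicator-testing reduction: if the a.s. inequality $E^Q_t[-X]\leq\alpha_t(Q)$ fails on a set $A$ of positive measure, one should note that $\alpha_t(Q)<\|X\|_\infty$ on $A$ and $\alpha_t(Q)\geq-\rtz(0)$ is bounded below (since $\rtz(0)\in\Lt$), so the scalar constraint for $B=A$ has finite right-hand side and is indeed violated; with that remark your chain of equivalences is airtight. In short, the paper's citation buys brevity and a conceptual link to continuity from above, while your argument buys a proof readable without opening the reference.
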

\begin{proof}
{The risk measure $\rtz$ restricted to $L^{\infty}_T$ takes values in $L^{\infty}_t$.}
As a consequence,  $\{ \rtz\leq 0\}\cap L^{\infty}_T$ is exactly the acceptance set of the risk $\rtz$ restricted to $L^{\infty}_T$. Such set is then weak* closed because, by a standard result on conditional risk measures on $L^{\infty}_T$ \cite{jbn0}, continuity from below  implies continuity from above,  which in turn is equivalent to the weak* closure of the acceptance set. \\
\indent  To prove  $  \{ \rtz\leq 0\}\cap L^{\infty}_T = \overline{\{ \rtz< 0\}\cap L^{\infty}_T}^*  $, it  is enough to show  the inclusion $\subseteq$. So, pick  a bounded $Y , \rtz(Y) \leq 0$  and simply consider $ Y_n = Y + \frac{1}{n}$. Then,  $\rtz(Y_n) \leq -\frac{1}{n}$,  so  $Y_n \in \{ \rtz< 0\}$, and  $Y_n \rightarrow Y $ in the uniform norm topology  and in particular in the weak* one.
\end{proof}

\subsection{Time consistent DPMs: examples and counterexamples}
\begin{example} The dynamic version of Example \ref{ut}.  Suppose we are given a stochastic dynamic utility as described at the beginning of Example \ref{CCE}. Then,  $\beta$ defined as
$$ \bt(X) = E_t[U(X,t, \omega)]$$
is a DPM.  Thus, it is time consistent if
$$ E_t[U(X,t, \omega)]> z \Rightarrow E_s[ U(X,s, \omega)]>z  $$
for any choice of dates $s<t$.   Then
\begin{itemize}
  \item if $U=U(x)$, $\beta$ is obviously time consistent
  \item in the exponential case, $ U(x,t,\omega) = 1- e^{-\gamma_t x }$, $\beta$ is time consistent if  the risk aversion process is pathwise non decreasing, i.e.  $ t \rightarrow \gamma_t(\omega)$ is non increasing.
\end{itemize}
\end{example}
\begin{example}[Dynamic Certainty Equivalent]  Suppose the utility $U=U(x)$ is finite on $\mathbb{R}$  and strictly increasing.
  Then, the dynamic certainty equivalent process $C=(C_t)_t$:
  $$  C_t(X) = U^{-1}( E_t[U(X)])$$
 easily seen to be  a time consistent DPM. It is also straightforward to prove that $C_s$ is normalized, i.e. $C_s(z)=z \;\forall z \in \mathbb{R}$, and satisfies the  \emph{strong} time consistency  property: $C_s(C_t(X)) = C_s(X) $. Thus it is straightforward to check that the DPM $C$ is time consistent according to our definition. This example is the only possible one  satisfying the  law invariance, the normalization and the   strong time consistency properties, as shown by   Kupper and Schachermayer
 \cite[Theorem 1.4]{ks}.
\end{example}
\begin{example}[Dynamic Gain-Loss Ratio]  The DAI defined by
$$ GLR_t (X) =   \left  \{ \begin{array}{cc}
                            \frac{E_t[X]}{ E_t[X^-]} & \text{ on  } E_t[X]> 0  \\
                             0  &   \text{ otherwise }
                             \end{array} \right. $$
 is time consistent. If $ z>0$ is fixed, then $ GLR_t (X)> z$  is equivalent to $ E_t[ X^+ - (1+z) X^-]>0$, so that,  when $s<t$,  $E_s[  X^+ - (1+z) X^-]>0$, i.e. $GLR_s(X)>z$. This dynamic index has been used by Bielecki et alii \cite{bciz} to compute price intervals for  path dependent options.
\end{example}
\begin{example}
As pointed out also by Bielecki et alii, \cite[Example 6.5]{bcz}, in general the DCAIs originating from reward to  risk ratio are not time consistent. The reason is that  $E_s[\sigma_t(X)]$ is not in general greater than  $\sigma_s(X)$. For instance, the reward to $p$-th lower partial moment, $1<p<+\infty$, which is but a slight variation of $GLR$:
$$
\left  \{ \begin{array}{cc}
 \frac{E_t[X]}{( E_t[(X^-)^p])^{\frac{1}{p}}}& \text{ on  } E_t[X]> 0  \\
                             0  &   \text{ otherwise }
                             \end{array} \right. $$
is not time consistent as e.g. simple numerical examples in a two step binomial tree  show.

\end{example}
\begin{example} [DAI from g-expectations]
In  \cite{rs}, the notion of dynamic acceptability index  has been introduced in
 the particular case of a Brownian filtration from a family of $g$-expectations $({\cal E}_{g_{\gamma}})_{\gamma}$ defined on $L^2_T$.
 More precisely, fix a Brownian motion $B$. For every $\gamma \in \mathbb{R}_+ $  consider  a driver $g_\gamma$   and   the associated BSDE:
\begin{equation}
-dY_t=g_{\gamma}(t,Y_t,Z_t)dt-Z_tdB_t,\; \forall t \in [0,T],\;\text{and}\; Y_T=X
\label{eqBSDE}
\end{equation}
with   terminal condition $X\in L^2_T$. As in  \cite{rs}, assume $g_{\gamma}(s,\omega,z)$ is  convex and uniformly 
Lipschitz in $z$, and  $g_{\gamma}(s,\omega,0)=0$. The  (unique) solution is a couple of processes $(Y,Z)$ and  $ Y_t$ is denoted by 
$ {\cal E}_{g_{\gamma}}(X|{\cal F}_t)$. Let  $ \rho^{\gamma}_t(X)={\cal E}_{g_{\gamma}}(-X|{\cal F}_t)$. 
The family $ (\rho^{\gamma}_t)_t$ is then a family of  normalized dynamic risk measures continuous from above 
satisfying the strong time consistency property: for all $0 \leq s \leq t \leq T$, $\rho^{\gamma}_s(-\rho^{\gamma}_t(X))=\rho^{\gamma}_s(X)$.
 Moreover it follows from Lemma 36.3 of Peng \cite{Peng}  that $\rho^{\gamma}_t$ defined above is continuous from below on $ L^2_T$. 
Thanks to normalization, translation invariance and monotonicity, $\rho^{\gamma}_t$ is bounded on bounded variables.  
Therefore, any risk measure  $\rho^{\gamma}_t$ can   be uniquely extended  to a  normalized risk measure on $\LbbT$, 
denoted in the same way, which is continuous from below and  is $\Lbat$-valued.  \\
 \indent Suppose now that for a given $t$,  $g_{\gamma}(t,\omega,z)$ is a  continuous function of 
$\gamma$, uniformly in $(\omega,z)$, and is non decreasing. Then, $(\rho^{\gamma}_t)_\gamma$
  is a standard family of conditional risk measures. Call  $\beta_t(X)$ the generated CAI as given by formula \eqref{risk2beta-eq}.  
It is easy to check that $\beta =(\beta_t)_t$  is a DAI.  It satisfies  Property 3  of Proposition \ref{consistency} thus it is time consistent.
\end{example}
  
\begin{example}[DPM from general time consistent dynamic risk measures]
Let $(\Omega,{\cal F},{\cal F}_t,P)$ be a filtered probability space. For all $z_d<z<z_u$, let $\rho^z_{s,t}$, be a  
time consistent dynamic risk measure on $L^{\infty}$ i.e for all
 $0 \leq r \leq s \leq t \leq T$, $\rho^z_{s,t}: L^{\infty}_t \rightarrow  L^{\infty}_s$ and  $\rho^z_{r,t}(X)=\rho^{z}_{r,s}(-\rho^{z}_{s,t}(X))$. 
  Assume that for all $0 \leq s \leq t \leq T$, $\rho^{z}_{s,T}$ is continuous from below  and that $\rho^z_{s,t}(0) \leq 0$.
 Assume also that $z \rightarrow  \rho^z_{s,T}(X)$ is non decreasing and continuous  for all $X$. 
The family $\rho^z_{s}=\rho^z_{s,T}$ is then a standard family of risks. The corresponding DPM is time consistent.\\ 
\indent General examples of  time consistent dynamic risk measures with possible jumps may be obtained   from a stable set
 ${\cal Q}$ of probability measures all equivalent and penalties $\alpha_{s,t}(Q)$ defined for $Q \in {\cal Q}$
 satisfying the local and cocycle properties (see Bion-Nadal \cite{BN2008} and \cite{bn1}). For all $X \in L^{\infty}_t$,
$\rho^z_{st}(X)=\esssup_{Q \in {\cal Q}} (E^Q_s[-X]-\alpha^z_{s,t}(Q))$. 
In particular when ${\cal Q}$ is weakly relatively compact and the penalty is a lower semi-continuous function of $Q$ (which is the case for the minimal penalty), $\rho^z_{st}$ is continuous from below. Assuming furthermore that $z \rightarrow \alpha^z_{s,t}(Q)$ is  non increasing and continuous (uniformly in $Q$),  the family $\rho^z_{s}=\rho^z_{s,T}$ is a standard family of risks. The families of time consistent dynamic risk measures constructed in \cite{BN12-7} 
 in a Markovian setting can easily be used to construct such a standard family of risks.\\
\indent This construction generalizes that of DAIs from g-expectations.  
\end{example}

\subsection{Extension of  performance measures to dividend processes}
Bielecki et alii  introduced in \cite{bcz} a notion of dynamic acceptability index for dividend processes on a finite probability space  with finite dates $\T$.   Our goal here is to show that  there is a natural correspondence between CPMs on random variables and CPMs on processes, and   between DPMs on random variables and DPMs on dividend  processes.

\begin{definition}
A process $D =(D_t)_{t\in \T}$  is called a (discounted) dividend process if
\begin{enumerate}
\item  $D_t  \in \Lbbt,$ for all $t\in \T$
\item  in case $\T$ is infinite,  there is a discrete subset $\T_D$ of $\T$ such that $D_t=0$ if $t \notin \T_D$
\item  $D_t \geq 0$ except for a finite number of $t \in \T_D$.
\end{enumerate}
Denote by ${\cal D}$ the lattice of dividend processes.
\label{defdiv}
\end{definition}

Fix now $t\in \T$.  A CPM $ \bt:\LbbT \rightarrow \Lbbt$  induces a   CPM on dividend processes,   $\hat{\bt}: \cal{D} \rightarrow \Lbbt$, by setting
 $$\hat \bt(D): =\bt(\sum_{r \geq t}D_r))$$

Indeed,  $\hat \beta_t$  is well defined, as $ \sum_{r\geq t}D_r$ is bounded from below from item 3 in Definition \ref{defdiv}, and   $\hat \bt (D) \in \Lbbt$.  The definition of $\hat \beta_t$ implies
 \begin{equation}\label{bhat}
 \hat \beta_t(D) = \hat \beta_t(  \left (\sum_{r \geq t}D_r \right ) I_{\{ T\}})
 \end{equation}
where the notation $  \xi  I_{\{ r \}} $ with $\xi \in L^{bb}_r$ here and in the following indicates the dividend process with a unique cashflow of $\xi$ at time $r$.  So, it is straightforward  to check that  $ \hat \bt$   satisfies the following properties:
\begin{enumerate}

\item independence of the past and locality:
 for $D, D'\in \cal{D}$, if  for some $B\in \Ft$  $D_r I_B=D'_r I_B$ for all $r \geq t$ then $I_B\hat \beta_t(D)=I_B\hat \beta_t(D')$
\item    $\essinf_D \hat \beta_t(D) =z_d, \esssup_D \hat \beta_t(D) =z_u $
\item  non decreasing monotonicity:
if $D_r \geq D'_r$ for all $r \geq t$, then  $\hat \beta_t(D) \geq \hat  \beta_t(D')$;
\item   strict monotonicity over constant positive shifts, i.e.
$$ c>0, r\geq t \Rightarrow \hat \beta_t(D+cI_{\{r^*\}})> \hat \beta_t(D) \text{ on } \{ \hat \beta_t(D + cI_{\{r^*\}})<z_u \} \cap \{ \hat \beta_t(D+c )>z_d\} $$
\item quasi concavity:
 for any    $c \in[0,1] $  and for all $D,D' \in {\cal D}$
     $$ \hat \beta_t( c D + (1-c)D') \geq \min( \hat \beta_t(D), \hat \beta_t(D')) $$
\item translation invariance:  for all $D \in {\cal D}$ and $\xi \in \Lbbt$,  measurable bounded from below, for all $r \geq t$
  $$\hat \beta_t(D + \xi I_{\{t\}})=\hat \beta_t (D + \xi I_{\{r\}})$$
\item scale invariance, if $\bt$ is a CAI.
\end{enumerate}

 A  map $\alpha_t$  on $\cal{D}$ with the  above properties  will be called a CPM for processes (a CAI for processes if scale invariant). In turn any $\alpha_t$    induces a  CPM  $\bt^*$ on $\LbbT$  by the formula
$$ \bt^*(X) =\alpha_t( X I_{\{T\}}) $$
i.e. by restricting $\alpha_t$ to dividend processes  with a single cashflow at the terminal date $T$.  The relation
$$ ({\hat \bt})^* = \bt,  $$
 is clear, while in general $  \widehat{\alpha_t^* } \neq \alpha_t$ unless $\alpha_t$ verifies \eqref{bhat}.  \\
 \indent  A DPM for dividend processes is then a collection $\alpha = (\alpha_t)_{t\in\T}$ of CPMs with  $\alpha_t(+\infty) =z_u, \alpha_t(-\infty):= \lim_n \alpha_t(-n) = z_d$  for all $t$.
 The time consistency property for these DPMs is formulated as the analog of time consistency for DPM on variables:

\begin{definition}
The DPM for processes  $\alpha$ is time consistent  if
$$\alpha_t(D)> z \Rightarrow \alpha_s(D) >z  \text{ for all } D \in \cal{D}, z_d < z<z_u$$
\end{definition}
We conclude by noting that a DPM $\beta$ is time consistent if and only if the induced DPM for processes $\hat{\beta}$ is time consistent. The simple proof is left to the reader.

\begin{remark}
In the same way, one can show  there is  a  correspondence between   conditional/dynamic  risk measures on dividend processes and conditional/dynamic risk measures on random variables.
\end{remark}
{\small

}
\end{document}